\documentclass[11pt,a4paper]{article}
\usepackage[T1]{fontenc}
\usepackage[utf8]{inputenc}
\usepackage{algorithm}
\usepackage{algpseudocode}
\usepackage{multirow}
\usepackage{lmodern}
\usepackage{microtype}
\usepackage[a4paper,margin=2.5cm]{geometry}
\usepackage{amsmath,amssymb,amsthm,bm}
\usepackage{mathtools}
\usepackage{graphicx}
\usepackage{booktabs}
\usepackage{xcolor}
\usepackage{caption}
\usepackage[round,authoryear]{natbib}
\usepackage[hidelinks]{hyperref}
\usepackage{enumitem}
\usepackage{placeins}

\theoremstyle{thmstyleone}
\newtheorem{theorem}{Theorem}
\newtheorem{proposition}[theorem]{Proposition}
\newtheorem{corollary}[theorem]{Corollary}

\theoremstyle{thmstyletwo}
\newtheorem{remark}{Remark}

\theoremstyle{thmstylethree}
\newtheorem{definition}{Definition}
\newtheorem{assumption}{Assumption}

\date{}
\title{\bfseries A
unified framework for estimating direct causal effect under spatial
confounding and interference, with the R package \textsf{spaci}}

\author{Isqeel Ogunsola\textsuperscript{1,2} and Olatunji Johnson\textsuperscript{1}\\[0.3em]
\small \textsuperscript{1}Department of Mathematics, University of Manchester, Manchester, UK\\[0.2em]
\small \textsuperscript{2}Department of Statistics, Federal University of Agriculture, Abeokuta, Nigeria \\[0.2em]
\small Corresponding author: Isqeel Ogunsola
(\href{mailto:olatunji.johnson@manchester.ac.uk}{\texttt{isqeel.ogunsola@manchester.ac.uk}})}

\begin{document}

\maketitle

\begin{abstract}    
\noindent Estimating causal effects is challenging with observational data due
to the lack of exchangeability, and is further complicated with spatial data.
In real-life applications such as agricultural and environmental studies,
unobserved spatial factors (spatial confounding, SC) and interactions of units
in nearby locations (spatial interference, SI) commonly occur jointly, yet
existing parametric methods address them separately. We develop a unified framework that
treats both simultaneously. We first define the estimand precisely, the
average \emph{direct} effect of a unit's own treatment on the treated, holding
neighbourhood exposure at its observed level and give an identification
result. 
We then study two estimators. The matching estimator iDAPS
integrates neighbourhood exposure and spatial proximity into propensity score
matching with balance-optimised weight. We prove a \emph{variogram bias
bound} showing its confounding bias is controlled by the spatial variability
of the confounder at the matched distance; this justifies the composite
metric, yields a consistency result, and provides a computable diagnostic.
The doubly robust estimator recoverU$+$ augments the propensity and outcome
models with the recovered confounder and the exposure; we prove it is doubly
robust up to an explicit residual bias from the unrecoverable
confounder component. We show that independence-based standard errors are
anticonservative under spatial dependence; we provide spatial
heteroskedasticity and autocorrelation consistent (HAC), block-bootstrap,
and randomisation-based alternatives whose calibration we verify. Every theoretical result is validated numerically. Estimating the
effect of selective catalytic and non-catalytic reduction (SCR/SNCR) technologies on
ambient ozone reveals no evidence of ozone reduction, while a naive analysis
would have delivered a confidently wrong sign. All methods are implemented in the developed
open-source R package \textsf{spaci}.
\end{abstract}

\noindent\textbf{Keywords:}  Doubly robust estimation, Propensity score, Spatial interference, Spatial bootstrap,  Unobserved spatial confounder, Variogram.

\maketitle

\section{Introduction}\label{sec:intro}

Investigating the cause-and-effect relationships of interventions, treatments
or exposures on an outcome is of interest in many disciplines, such as
agriculture, medicine, epidemiology, economics, environment and public
health, and plays a significant role in decision-making and policy
formulation. In many recent studies, research questions are causal in nature
rather than mere description or association modelling
\citep{cacciarelli2025we, grace2025causal}. In causal
inference with spatial observational data, challenges such as spatial
confounding (violation of the unconfoundedness assumption) and spatial
interference arising from treatment spatial dependence are inevitable
\citep{papadogeorgou2019}. For example, in agricultural
studies, estimating the effect of fertiliser on crops where soil nutrients
vary spatially, without accounting for hidden unmeasured spatial variables
(spatial confounders) that also affect the crop, will bias the estimated
effect; and fertiliser or pest control applied in one plot can leach into and
affect the surrounding field (interference). There are numerous analogous
scenarios in epidemiology, climate science, education and social-programme
studies. 

Spatial confounding is sometimes misinterpreted as spatial
interference and vice versa, especially when both are present and only one is
accounted for, although the two phenomena are distinct
\citep{papadogeorgou2019, ogunsola2026disentangling}. SC is a missing variable problem, while spatial interference
reflects complex dependencies or interactions of treatment units
\citep{giffin2023generalized, papadogeorgou2023, aronow2017estimating}. In
the presence of these two phenomena, causal effects are biased if not
accounted for, and invariably result in misleading inferences
\citep{papadogeorgou2023, ogunsola2026disentangling}. Figure~\ref{fig:dag}
gives the directed acyclic graph (DAG) for the problem in two spatial
settings, including the treatments ($A_1, A_2$), spatial confounders
($U_1, U_2$), their spatial dependence ($U_1 - U_2$) and spatial interference
($A_1 \rightarrow Y_2$, $A_2 \rightarrow Y_1$); the right panel shows the
problem at the block level.

\begin{figure}[h]
\centering
\begin{minipage}{0.5\textwidth}\centering
  \includegraphics[width=\linewidth]{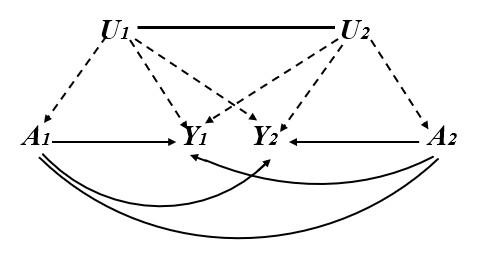}
\end{minipage}\hfill
\begin{minipage}{0.4\textwidth}\centering
  \includegraphics[width=0.8\linewidth]{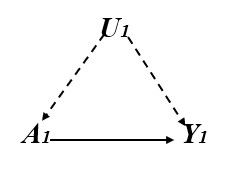}
\end{minipage}
\caption{The problem. \emph{Left:} causal effects in the presence of spatial
confounders $U_1, U_2$ (dashed: unobserved), spatial dependence and spatial
interference across two settings. \emph{Right:} the block-level view with
treatment $A$, outcome $Y$ and unmeasured confounder $U$.}\label{fig:dag}
\end{figure}

A substantial literature addresses each problem separately. For spatial
confounding, methods like restricted spatial regression  \citep{reich2006},
distance-adjusted propensity scores \citep[DAPS;][]{papadogeorgou2019},
structural equation approaches \citep{thaden2018}, machine-learning
adjustments \citep{gilbert2021}, and doubly robust estimation with partially
recovered confounders \citep[recoverU;][]{pokal2023improved} have been employed. Methods for addressing spatial interference include; exposure mappings and network estimators
\citep{aronow2017estimating, savje2021}, propensity-score methods
\citep{giffin2023generalized, zirkle2021addressing, forastiere2021}, and
instrumental variables \citep{giffin2021}. Methods that handle both
\emph{simultaneously} are scarce: \citet{papadogeorgou2023} propose a joint
Bayesian model whose conclusions are sensitive to prior specification, and
\citet{giffin2021} treated confounding and spillover via instruments that are
difficult to find and validate. A transparent, design-based frequentist
framework for the joint problem has been missing. This paper provides one,
with five contributions.

\emph{(i) A precise estimand and identification result}
(Section~\ref{sec:setup}). Under interference, ``the average treatment effect
on the treated'' is ambiguous, because switching a unit's treatment may also
change its neighbours' exposures. We define the \emph{average direct effect
on the treated at observed exposure} (ADET), show it is the quantity targeted
by both estimators, and prove identification under explicit assumptions,
including an honest \emph{recoverability} assumption isolating exactly what
must be believed about the unmeasured confounder.

\emph{(ii) A matching estimator with a bias guarantee}
(Section~\ref{sec:idaps}). iDAPS extends DAPS by matching on a composite of
propensity-score, spatial and neighbourhood-exposure distances, with weights
chosen by balance optimisation rather than manual tuning. Our key theoretical
result is a \emph{variogram bias bound}: a matched-pair difference is a
spatial differencing operation, and the residual confounding bias is
controlled by the variogram of the confounder at the matched distance. The
bound explains \emph{why} the composite metric works, yields a consistency
corollary, exposes an irreducible bias floor under micro-scale confounding,
and gives a diagnostic computable from data.

\emph{(iii) An honest guarantee for doubly robust estimation}
(Section~\ref{sec:recoveru}). recoverU$+$ extends recoverU method by including the
neighbourhood exposure in both nuisance models. Because only the
treatment-\emph{uncorrelated} part of the confounder can be recovered,
unqualified double robustness is impossible; we prove \emph{partial double
robustness}, with an explicit residual bias $b_{U_{A}}$ that vanishes when the
confounder does not drive treatment.

\emph{(iv) Valid, computation-aware inference} (Section~\ref{sec:inference}).
The influence values of the doubly robust estimator are spatially correlated,
so the usual independence-based standard error is anticonservative. We
provide spatial heteroskedasticity- and autocorrelation-consistent (HAC),
block-bootstrap and conditional-randomisation
alternatives and verify their finite-sample calibration.

\emph{(v) Validated software} (Section~\ref{sec:software}). All methods,
simulations and the application are reproducible with the R package
\textsf{spaci}; every proposition is accompanied by a numerical validation

Section~\ref{sec:sims} presents the simulation studies including an
explicit stress test of double robustness and of exposure-mapping
misspecification. In Section~\ref{sec:application}, the application to
SCR/SNCR technologies and ambient ozone formation is given. Section~\ref{sec:discussion} gives the discussion,
concludes with limitations, stated as results, and extensions.

\section{Setup, estimand and identification}\label{sec:setup}

\subsection{Data and interference structure}

Suppose in a spatial observational study we observe
$\{(Y_i, A_i, X_i, s_i)\}_{i=1}^{n}$ at locations
$s_i \in D \subset \mathbb{R}^2$: a continuous outcome $Y_i$, a binary
treatment $A_i \in \{0,1\}$, covariates $X_i \in \mathbb{R}^p$, and an
\emph{unmeasured} spatial confounder $U(s_i)$, a mean-zero random field.
Interference is summarised through a neighbourhood-exposure mapping
\begin{equation}\label{eq:exposure}
  E_i \;=\; \sum_{j \ne i} G_{ij}(\tau)\, A_j,
  \qquad
  G_{ij}(\tau) = \exp\!\big(-\lVert s_i - s_j \rVert / \tau\big),
  \quad G_{ii} = 0,
\end{equation}
optionally row-normalised so that $E_i$ is a weighted average of neighbours'
treatments; $\tau > 0$ is a spatial decay bandwidth. Potential outcomes are
indexed by a unit's own treatment and its exposure, $Y_i(a, e)$
\citep{hudgens2008, forastiere2021}.

We decompose the confounder field as
\begin{equation}\label{eq:decomp}
  U(s_i) \;=\; U_{A}(s_i) + U_{R}(s_i),
\end{equation}
where $U_{A} = \Pi(U \mid \mathcal{A})$ is the $L_2$-projection of $U$ onto
$\mathcal{A}$, the closed linear span of the (centred) treatment field
$\{A_j\}_{j=1}^{n}$ which is the component of the confounder correlated with
treatment and $U_{R}$ is the orthogonal remainder. This decomposition is central to
everything that follows, and the asymmetry between its two parts is a matter
of identification, not of technique. From a single realisation of the data,
any component of $U$ that moves with the treatment field is observationally
indistinguishable from a treatment effect: no procedure, however flexible,
can separate the two without further assumptions. The orthogonal component
$U_{R}$, by contrast, leaves a visible trace, smooth, spatially structured
variation in the outcome that treatment, covariates and exposure cannot
explain and can therefore be estimated from outcome residuals by
kriging \citep{pokal2023improved}. Recovered confounder method exploit
exactly this trace; what they cannot do, and what
Proposition~\ref{prop:pdr} makes precise, is recover $U_{A}$, whose influence
must be bounded rather than removed.

\subsection{Estimand}

\begin{definition}[Average direct effect on the treated at observed exposure]
\label{def:adet}
$\mathrm{ADET} = \mathbb{E}\big[ Y_i(1, E_i) - Y_i(0, E_i) \mid A_i = 1 \big]$.
\end{definition}

$\mathrm{ADET}$ is the effect of switching a treated unit's \emph{own} treatment off
while holding its realised neighbourhood exposure fixed: a \emph{direct}
effect. In the application of Section~\ref{sec:application} it answers a
concrete question: for a facility that installed the technology, what would
its ozone concentration have been had it not installed, with all
neighbouring installations held exactly as they are? Without interference,
$Y_i(a,e)$ reduces to $Y_i(a)$ and $\mathrm{ADET}$ coincides with the classical average treatment effect of the treated (ATT);
with interference the classical ATT is ill-defined, switching $A_i$ may
also change neighbours' exposures, so ``the effect of treatment'' silently
mixes the direct effect with spillovers effects and $\mathrm{ADET}$ is the well-defined
quantity both estimators target. The companion \emph{spillover effect on the
treated}, $\mathrm{SET}(e,e') = \mathbb{E}[Y_i(a,e) - Y_i(a,e') \mid A_i = a]$, is a
distinct estimand identified through the exposure margin of the outcome
model; keeping the two separate prevents estimators that in fact target
different quantities from being compared as if they estimated the same
thing (see the comparability remark in Section~\ref{sec:software}).

Throughout we use the working structural model
\begin{equation}\label{eq:outcome}
  Y_i \;=\; \theta_1 A_i + \theta_2^{\top} X_i + \theta_3 E_i
            + \theta_U\, U(s_i) + \varepsilon_i,
  \qquad \mathbb{E}[\varepsilon_i \mid X, E, U, A] = 0,
\end{equation}
under which $\mathrm{ADET} = \theta_1$, making the simulation ground truth
unambiguous.

\subsection{Assumptions and identification}

\begin{assumption}[Latent conditional ignorability]\label{a:ign}
$\{Y_i(a,e)\} \perp \!\!\! \perp A_i \mid X_i, E_i, U(s_i)$.
\end{assumption}
\begin{assumption}[Correctly specified exposure mapping]\label{a:exp}
$Y_i(a,e) \perp \!\!\! \perp \bm{A}_{-i} \mid E_i = e$.
\end{assumption}
\begin{assumption}[Positivity]\label{a:pos}
$0 < \mathbb{P}(A_i = 1 \mid X_i, E_i, U) < 1 $
\end{assumption}
\begin{assumption}[Consistency]\label{a:con}
$Y_i = Y_i(A_i, E_i)$.
\end{assumption}
\begin{assumption}[Recoverability]\label{a:rec}
$\mathbb{E}[Y_i(0,e) \mid X, E, U] = \mathbb{E}[Y_i(0,e) \mid X, E, U_{R}]$; equivalently,
under \eqref{eq:outcome}, $U_{A}$ enters $Y(0)$ only through the additive term
$\theta_U U_{A}$.
\end{assumption}

Assumptions~\ref{a:ign}--\ref{a:con} are the interference-aware analogues of
the standard causal conditions \citep{hudgens2008, forastiere2021};
Assumption~\ref{a:exp} states that all interference flows through the
exposure summary \eqref{eq:exposure} \citep[for the consequences of
misspecifying it, see][and Section~\ref{sec:misspec}]{savje2024}.
Assumption~\ref{a:rec} is the honest crux of any recovered-confounder method,
and Proposition~\ref{prop:pdr} quantifies the bias when it fails.

\begin{proposition}[Identification]\label{prop:id}
Let $V_i = (X_i, E_i, U_{R}(s_i))$.
(a) Under Assumptions~\ref{a:ign}--\ref{a:rec},
\[
  \mathrm{ADET} \;=\; \mathbb{E}\big\{\, \mathbb{E}[Y \mid A{=}1, V] - \mathbb{E}[Y \mid A{=}0, V]
              \,\big|\, A = 1 \big\}.
\]
(b) Without Assumption~\ref{a:rec}, the right-hand side differs from $\mathrm{ADET}$
by exactly the residual term $b_{U_{A}}$ of Proposition~\ref{prop:pdr}.
\end{proposition}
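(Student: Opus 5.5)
The plan is to split $\mathrm{ADET}$ into its two arms and identify each one separately. The treated arm needs no adjustment for confounding. The control arm is handled by a standard ignorability argument that uses the full confounder $U$; Assumption~\ref{a:rec} is then used to replace $U$ by $U_{R}$.

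\emph{Treated arm.} By Assumption~\ref{a:con}, $Y_i(1,E_i)=Y_i$ on $\{A_i=1\}$. The tower property then gives
\[
\mathbb{E}[Y_i(1,E_i)\mid A_i=1]=\mathbb{E}\{\mathbb{E}[Y\mid A=1,V]\mid A=1\}
\]
for any $V$, and in particular for $V_i=(X_i,E_i,U_{R}(s_i))$.

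\emph{Control arm, with the full confounder.} Condition on $W_i=(X_i,E_i,U(s_i))$ and evaluate the potential outcome at $e=E_i$. Assumption~\ref{a:exp} ensures that $Y_i(0,e)$ depends on $\bm{A}_{-i}$ only through $E_i$, so fixing $e=E_i$ inside the conditioning is legitimate. Assumption~\ref{a:ign} then gives
\[
\mathbb{E}[Y_i(0,E_i)\mid A_i=1,W_i]=\mathbb{E}[Y_i(0,E_i)\mid A_i=0,W_i].
\]
Positivity (Assumption~\ref{a:pos}) makes the right-hand side well defined, and consistency turns it into $\mathbb{E}[Y\mid A=0,W]$. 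Averaging over $W$ given $A=1$ identifies $\mathbb{E}[Y_i(0,E_i)\mid A_i=1]$ in terms of $W$. This step is the standard ATT argument with $E_i$ added to the covariates.

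\emph{Reduction from $U$ to $U_{R}$ (part (a)).} This is the main obstacle. Assumption~\ref{a:rec} states that $\mathbb{E}[Y(0,e)\mid X,E,U]=\mathbb{E}[Y(0,e)\mid X,E,U_{R}]$. Let $m_0(v)$ denote this common conditional mean. The delicate point is that we need $\mathbb{E}[Y\mid A=0,V]=m_0(V)$, which means the conditioning on $A$ must be dropped after passing to the coarser $V$. I would argue this as follows. By ignorability given $W$ and consistency,
\[
\mathbb{E}[Y\mid A=0,W]=\mathbb{E}[Y(0,E)\mid W]=m_0(V),
\]
and $m_0(V)$ is $\sigma(V)$-measurable. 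Because $\sigma(V)\subseteq\sigma(W)$ (since $U_{R}=U-U_{A}$, taking $U_{A}$ as determined by $U$ and the design), the tower property gives $\mathbb{E}[Y\mid A=0,V]=\mathbb{E}\{\mathbb{E}[Y\mid A=0,W]\mid A=0,V\}=m_0(V)$. The same function $m_0(V)$ also represents $\mathbb{E}[Y(0,E)\mid A=1,W]$. Averaging over $A=1$ then yields
\[
\mathbb{E}[Y_i(0,E_i)\mid A_i=1]=\mathbb{E}\{\mathbb{E}[Y\mid A=0,V]\mid A=1\}.
\]
Subtracting this from the treated-arm expression gives (a). If the measurability of $U_{A}$ is problematic, I would instead run the argument directly under the structural model~\eqref{eq:outcome}, where the additive form makes every conditional mean explicit.

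\emph{Part (b).} Without Assumption~\ref{a:rec}, I would work under model~\eqref{eq:outcome} and compute the right-hand side directly. We have
\[
\mathbb{E}[Y\mid A=a,V]=\theta_1 a+\theta_2^\top X+\theta_3E+\theta_U U_{R}+\theta_U\,\mathbb{E}[U_{A}\mid A=a,V].
\]
The contrast therefore equals $\theta_1=\mathrm{ADET}$ plus
\[
\theta_U\,\mathbb{E}\big\{\mathbb{E}[U_{A}\mid A=1,V]-\mathbb{E}[U_{A}\mid A=0,V]\,\big|\,A=1\big\}.
\]
I would identify this term with $b_{U_{A}}$ as it is defined in Proposition~\ref{prop:pdr}. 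It vanishes when $U_{A}\equiv 0$, that is, when the confounder does not drive treatment.
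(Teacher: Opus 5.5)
Your part (a) follows the paper's route. You use consistency for the treated arm; exposure mapping, ignorability and positivity for the control arm given $W=(X,E,U)$; recoverability to pass from $U$ to $U_{R}$; and iterated expectations. Your explicit tower step is more careful than the paper's Step~4, which simply ``chains Steps 1--3''.

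One correction to how you justify that step. $U_{A}=\Pi(U\mid\mathcal{A})$ is a linear combination of the centred treatments $A_j$, so it is a function of $\bm{A}$, not of $U$. Consequently $U_{R}(s_i)=U(s_i)-U_{A}(s_i)$ depends on $\bm{A}_{-i}$ through more than $E_i$, and the inclusion $\sigma(A_i,V_i)\subseteq\sigma(A_i,W_i)$ that your tower step needs is not automatic. The paper uses the same coarsening without comment. So this is an implicit regularity condition you share with it, which should be stated rather than derived, not an idea you are missing.

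The genuine gap is the last step of part (b). You stop at
\[
\theta_U\,\mathbb{E}\big\{\mathbb{E}[U_{A}\mid A=1,V]-\mathbb{E}[U_{A}\mid A=0,V]\,\big|\,A=1\big\}
\]
and say you ``would identify'' this with $b_{U_{A}}$. But $b_{U_{A}}$ is defined through an odds-weighted control mean, not a nested conditional expectation, and showing that the two coincide is the actual content of (b).

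The first summand reduces to $\mathbb{E}[U_{A}\mid A=1]$ by the tower property. The second requires the ATT reweighting identity. Let $m(V)=\mathbb{E}[U_{A}\mid A=0,V]$, $p(V)=\mathbb{P}(A=1\mid V)$ and $w=p/(1-p)$. Then
\begin{align*}
\mathbb{E}[m(V)\mid A=1]
&=\frac{\mathbb{E}[A\,m(V)]}{\mathbb{E}[A]}
=\frac{\mathbb{E}[p(V)\,m(V)]}{\mathbb{E}[A]}
=\frac{\mathbb{E}[(1-A)\,w(V)\,m(V)]}{\mathbb{E}[A]}\\
&=\frac{\mathbb{E}[(1-A)\,w(V)\,U_{A}]}{\mathbb{E}[(1-A)\,w(V)]}
=\mathbb{E}^{w}[U_{A}\mid A=0].
\end{align*}
This chain uses:
\begin{itemize}
\item $\mathbb{E}[A\mid V]=p(V)$ and $\mathbb{E}[1-A\mid V]=1-p(V)$;
\item $\mathbb{E}[U_{A}\mid A,V]=m(V)$ on $\{A=0\}$;
\item $\mathbb{E}[(1-A)w(V)]=\mathbb{E}[p(V)]=\mathbb{E}[A]$.
\end{itemize}
This is exactly how the paper closes (b).

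The calculation also shows a caveat worth stating. The identity holds only when $w$ is the odds of the \emph{true} $\mathbb{P}(A=1\mid V)$. The identification gap therefore equals $b_{U_{A}}$ evaluated at the true propensity, not at an arbitrary, possibly misspecified, working $w$ of Proposition~\ref{prop:pdr}.
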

A complete proof, with each step explained, is given in
Appendix~\ref{app:id}.

In words, once the exposure summary and the recoverable part of the
confounder are conditioned on, treated and control units with the same $V$
are exchangeable, and the familiar regression, matching and weighting
machinery applies. Part~(b) is as useful as part~(a): it says the framework
degrades \emph{gracefully} when the recoverability assumption fails, the
identification error is not arbitrary but exactly the quantity that
Proposition~\ref{prop:pdr} characterises and that a sensitivity analysis
can probe.

\begin{remark}\label{rem:ps}
The adjustment probability $p(V) = \mathbb{P}(A_i = 1 \mid X_i, E_i, U_{R})$ is an
individual propensity score that also conditions on a neighbourhood treatment
summary, in the spirit of \citet{forastiere2021}; the additions here are the
continuous kernel-weighted exposure and the latent confounder handled by
recovery. Conditioning on $E_i$, a function of other units' treatments
is licensed by a joint (conditional-field) model of the treatment vector; see
\citet{tchetgen2021auto}. The two estimators operationalise this adjustment
probability differently: recoverU$+$ estimates $p(V)$ directly, augmenting
its propensity model with $E$ and the recovered confounder
(Section~\ref{sec:recoveru}), whereas iDAPS fits its working propensity
score on the measured covariates $X$ alone and controls exposure and
spatial proximity through its matching metric instead
(Section~\ref{sec:idaps}).
\end{remark}

\section{iDAPS: composite matching with a variogram bias bound}
\label{sec:idaps}

\subsection{The estimator}

Let $\hat p(X_i)$ be a working propensity score fitted on the measured
covariates, and define three pairwise distances, each rescaled to $[0,1]$:
$D^{PS}_{ij} = |\hat p(X_i) - \hat p(X_j)|$,
$D^{\mathrm{Sp}}_{ij} = \lVert s_i - s_j\rVert$, and
$D^{\mathrm{Int}}_{ij} = |E_i - E_j|$. iDAPS matches on the composite
\begin{equation}\label{eq:comp}
  D^{\mathrm{comp}}_{ij} \;=\; \pi_1 D^{PS}_{ij} + \pi_2 D^{\mathrm{Sp}}_{ij}
                  + \pi_3 D^{\mathrm{Int}}_{ij},
  \qquad \pi_k \ge 0,\ \ \textstyle\sum_k \pi_k = 1 .
\end{equation}
Setting $\pi_3 = 0$ recovers DAPS \citep{papadogeorgou2019} and
$\pi_2 = \pi_3 = 0$ recovers naive propensity score matching, so iDAPS
strictly generalises both. Rather than tuning $\bm\pi$ by hand
\citep[DAPS is sensitive to its tuning weight;][]{papadogeorgou2019}, the
weights are selected by minimising a balance criterion over the matched
sample,
\begin{equation}\label{eq:balance}
  \hat{\bm\pi} = \arg\min_{\bm\pi}\; B\{M(\bm\pi)\}, \qquad
  B\{M\} = \sum_{k=1}^{p}
    \Big|\frac{\bar X^{(T)}_k - \bar X^{(C)}_k}{s_k}\Big|
    \;+\; \bar d \;+\;
    \Big|\frac{\bar E^{(T)} - \bar E^{(C)}}{s_E}\Big| ,
\end{equation}
where $M(\bm\pi)$ is the 1:1 caliper-matched sample under $D^{\mathrm{comp}}_{ij}$, the first
and third terms are absolute standardised mean differences in covariates and
exposure between matched treated and controls ($s_k$ and $s_E$ are the
standard deviations of $X_k$ and $E$ among the matched treated units), and
$\bar d$ is the mean matched spatial distance. (The criterion mixes standardised and raw scales;
Proposition~\ref{prop:bound} supplies the principled rescaling spatial
distance enters the bias through $\sqrt{2\gamma_U(\cdot)}$ and the
software offers that variant, though we retain the simple form
\eqref{eq:balance} for continuity with DAPS.) The estimate is
$\widehat{\mathrm{ADET}}^{\mathrm{iDAPS}} = m_T^{-1}\sum_{i:A_i=1}(Y_i - Y_{j(i)})$
over the $m_T$ matched pairs, where $j(i)$ denotes the control matched to
treated unit $i$. Uncertainty for the matching estimators is
quantified by the classical matched-pairs standard error,
$\widehat{\mathrm{SE}} = \mathrm{sd}\{Y_i - Y_{j(i)}\}/\sqrt{m_T}$, with
normal critical values. Matched-pair differencing partially cancels the
spatial dependence; nearby pairs share the confounder surface, which
drops out of the difference; which is what makes a paired standard error
defensible here; it does not account for the estimated propensity score or
the weight search, for which the block bootstrap of
Section~\ref{sec:inference} is available.

\begin{algorithm}[h]
\caption{iDAPS: composite-distance matching with balance-optimised weights}
\label{alg:idaps}
\begin{algorithmic}[1]
\Require data $\{(Y_i, A_i, X_i, s_i)\}_{i=1}^{n}$; kernel bandwidth $\tau$;
caliper $c$; grid step $\delta$
\State fit the working propensity score $\hat p(X)$ by logistic regression
\State compute exposures $E_i \gets \sum_{j\ne i} G_{ij}(\tau) A_j$ as in
\eqref{eq:exposure}
\State form the pairwise distance matrices $D^{PS}$, $D^{\mathrm{Sp}}$,
$D^{\mathrm{Int}}$ and rescale each to $[0,1]$
\For{each $(\pi_1, \pi_2)$ on the $\delta$-grid of the simplex, with
$\pi_3 = 1 - \pi_1 - \pi_2$}
  \State $D^{\mathrm{comp}} \gets \pi_1 D^{PS} + \pi_2 D^{\mathrm{Sp}}
          + \pi_3 D^{\mathrm{Int}}$
  \State $M(\bm\pi) \gets$ 1:1 nearest-neighbour match of treated to
  controls, without replacement, within caliper $c$
  \State $B(\bm\pi) \gets$ balance criterion \eqref{eq:balance} on
  $M(\bm\pi)$
\EndFor
\State $\hat{\bm\pi} \gets \arg\min_{\bm\pi} B(\bm\pi)$;\quad
$M \gets M(\hat{\bm\pi})$ with matched pairs $\{(i, j(i))\}_{i=1}^{m_T}$
\State $\widehat{\mathrm{ADET}} \gets m_T^{-1} \sum_{i : A_i = 1}
\big(Y_i - Y_{j(i)}\big)$;\quad
$\widehat{\mathrm{SE}} \gets \mathrm{sd}\{Y_i - Y_{j(i)}\} / \sqrt{m_T}$
\Ensure $\widehat{\mathrm{ADET}}$, $\widehat{\mathrm{SE}}$, weights
$\hat{\bm\pi}$, matched pairs $M$
\end{algorithmic}
\end{algorithm}

\noindent\emph{Computational cost.} The distance matrices cost $O(n^2)$;
each grid point costs $O(n_T n_C)$ (with $n_T$, $n_C$ the numbers of treated
and control units) for the matching, giving $O(\delta^{-2} n_T n_C)$ overall.
In practice this runs in seconds for $n = 500$ on one core.

\subsection{A variogram bias bound}

A matched-pair difference is a \emph{spatial differencing} operation:
subtracting a nearby control cancels the shared confounder surface, leaving a
residual governed by how much $U$ varies over the matched distance. Let
$\gamma_U(h) = \tfrac12 \mathbb{E}\{U(s) - U(s+h)\}^2$ denote the semivariogram of
$U$ and $d_{ij(i)} = \lVert s_i - s_{j(i)} \rVert$ the matched distance.

\begin{proposition}[Bias decomposition and bound]\label{prop:bound}
Under \eqref{eq:outcome},
\begin{equation}\label{eq:decomp-bias}
  \mathbb{E}\big[\widehat{\mathrm{ADET}}^{\mathrm{iDAPS}}\big] - \theta_1
  \;=\; \theta_2^{\top} \mathbb{E}[\Delta X] + \theta_3 \mathbb{E}[\Delta E]
        + \theta_U \mathbb{E}[\Delta U],
\end{equation}
where $\Delta X, \Delta E, \Delta U$ are within-pair differences averaged
over matched pairs. Consequently
\begin{equation}\label{eq:bound}
  \big|\mathbb{E}[\widehat{\mathrm{ADET}}^{\mathrm{iDAPS}}] - \theta_1\big|
  \;\le\; \sum_{k=1}^{p} |\theta_{2,k}|\, \mathbb{E}|\Delta X_k|
        + |\theta_3|\, \mathbb{E}|\Delta E|
        + c_{\mathrm{ov}}\, |\theta_U|\,
          \mathbb{E}\Big[\sqrt{2\gamma_U\big(d_{ij(i)}\big)}\Big],
\end{equation}
where $c_{\mathrm{ov}} \ge 1$ is a selection constant accounting for the
dependence of the matching map on $U$ through the treatment. (By
Cauchy-Schwarz, the first term is further bounded by
$\lVert\theta_2\rVert \mathbb{E}\lVert\Delta X\rVert$; the coordinatewise form is
tighter and invariant to the scaling of individual covariates.)
\end{proposition}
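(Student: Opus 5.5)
The plan is to substitute the working model \eqref{eq:outcome} into each matched-pair difference and then take expectations term by term. For a matched pair $(i,j(i))$ with $A_i=1$ and $A_{j(i)}=0$, equation \eqref{eq:outcome} gives
\[
  Y_i - Y_{j(i)} = \theta_1 + \theta_2^{\top}(X_i - X_{j(i)}) + \theta_3 (E_i - E_{j(i)}) + \theta_U\{U(s_i) - U(s_{j(i)})\} + (\varepsilon_i - \varepsilon_{j(i)}).
\]
Averaging over the $m_T$ pairs yields $\widehat{\mathrm{ADET}}^{\mathrm{iDAPS}} = \theta_1 + \theta_2^{\top}\Delta X + \theta_3 \Delta E + \theta_U \Delta U + \Delta\varepsilon$.

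The decomposition \eqref{eq:decomp-bias} then follows once $\mathbb{E}[\Delta\varepsilon]=0$. I would obtain this by conditioning on $(X,E,U,A)$ and the locations. The matching map $M(\hat{\bm\pi})$ depends only on $\hat p(X)$, $s$ and $E$, and $E$ is a function of $A$ and $s$. The outcome does not enter the matching, so the set of pairs is measurable with respect to that conditioning $\sigma$-field. The conditional mean-zero assumption on $\varepsilon$ then kills the term via the tower property. A subtlety is that $m_T$ is random, but it is also measurable with respect to the same $\sigma$-field, so this causes no difficulty.

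For the bound \eqref{eq:bound}, I would apply the triangle inequality and bring absolute values inside the expectations, which handles the $X$ and $E$ terms coordinatewise. For the confounder term I would write $|\mathbb{E}\Delta U| \le m_T^{-1}\sum_i \mathbb{E}|U(s_i)-U(s_{j(i)})|$.

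If the pairs were chosen independently of $U$, I would argue as follows. Conditional on the matched locations, Jensen (or Cauchy--Schwarz) gives
\[
  \mathbb{E}|U(s)-U(s')| \le \{\mathbb{E}(U(s)-U(s'))^2\}^{1/2} = \sqrt{2\gamma_U(\lVert s-s'\rVert)},
\]
using intrinsic stationarity/isotropy of $U$. Averaging over pairs would give the bound with $c_{\mathrm{ov}}=1$.

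The main obstacle is that the matching map is not independent of $U$. Treatment depends on $U$, so both $E$ and which units are treated or control depend on $U$. The matched distance $d_{ij(i)}$ is therefore a $U$-dependent random variable, and the unconditional variogram does not apply directly to selected pairs. I would handle this by defining
\[
  c_{\mathrm{ov}} = \sup \frac{\mathbb{E}[|U(s_i)-U(s_{j(i)})| \mid \text{pair selected}]}{\sqrt{2\gamma_U(d_{ij(i)})}},
\]
or equivalently as the ratio of the selected to the unselected second moment of increments. This makes the inequality hold by construction. I would then argue $c_{\mathrm{ov}}\ge 1$ in the relevant regime: selection on opposite treatment values tends to pick pairs with larger confounder increments, since high-$U$ units are treated and low-$U$ units are controls. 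I would also note that $c_{\mathrm{ov}}=1$ whenever $U$ does not drive treatment, for instance when $U_{A}=0$. I expect this selection step to require an explicit assumption or definition rather than a derivation. Finally, the parenthetical claim follows from $|\theta_2^{\top}\Delta X|\le\lVert\theta_2\rVert\lVert\Delta X\rVert$.
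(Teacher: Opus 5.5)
Your proposal follows the paper's proof essentially step for step: difference \eqref{eq:outcome} within each matched pair, kill $\mathbb{E}[\Delta\varepsilon]$ because the matching map is $\sigma(\bm A,\bm X,\bm s)$-measurable and never uses $Y$, bound the $X$ and $E$ terms by the triangle inequality and Jensen coordinatewise, and bound the confounder term by Jensen plus the semivariogram under hypothetical independence before absorbing the selection effect into $c_{\mathrm{ov}}$. The one difference is how $c_{\mathrm{ov}}$ is justified. The paper argues it is finite via Bayes' rule: the likelihood ratio $\mathbb{P}(\bm A\mid U)/\mathbb{P}(\bm A)$ that tilts the law of $U$ is controlled by positivity. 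You instead define $c_{\mathrm{ov}}$ as a supremum ratio, which as written need not be finite; it is cleaner to take $\max\{1,\cdot\}$ of the aggregate ratio than to argue heuristically that it exceeds $1$.
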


A complete proof is given in Appendix~\ref{app:bound}, together with the
proof of Corollary~\ref{cor:cons}.

The bound reads as a division of responsibilities. The
first two terms are controlled directly by balancing covariates and exposure in the
match against an \emph{unmeasured} confounder, however, the only available
lever is the geometry of the match: how far apart matched units are, 
because the variogram $\gamma_U(\cdot)$ is a property of nature, not of the
design. Everything the matching estimator can do about spatial confounding,
it does by making $d_{ij(i)}$ small.

\begin{corollary}[Consistency and an irreducible floor]\label{cor:cons}
If the caliper $c_n \to 0$ slowly enough that $m_T \to \infty$ and $U$ is
mean-square continuous with $\gamma_U(0^+) = 0$ (no nugget), the confounding
term in \eqref{eq:bound} vanishes: matching \emph{nearby} is what removes
spatial confounding. If $U$ has a micro-scale nugget $\eta > 0$, an
irreducible floor $c_{\mathrm{ov}}|\theta_U|\sqrt{2\eta}$ remains, a limit
shared by every distance-based adjustment, including DAPS.
\end{corollary}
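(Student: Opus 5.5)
The plan is to work directly from the bound \eqref{eq:bound} of Proposition~\ref{prop:bound}. We take its first two terms as controlled by the balance step, since the corollary concerns the confounding term. Only the third term, $c_{\mathrm{ov}}|\theta_U|\,\mathbb{E}[\sqrt{2\gamma_U(d_{ij(i)})}]$, remains to be analysed.

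The caliper enforces $D^{\mathrm{comp}}_{ij(i)}\le c_n$. Suppose first that $\pi_2$ is bounded away from zero along the sequence, or that matching is done with a spatial caliper. Then the rescaled spatial distance satisfies $d_{ij(i)}\le c_n/\pi_2\cdot \mathrm{diam}(D)$, so $\sup_i d_{ij(i)}\to 0$ deterministically. The requirement that $c_n\to0$ slowly enough that $m_T\to\infty$ ensures the estimator stays well defined, with nonempty matched sets, while this happens. We will state the extra condition on $\pi_2$ explicitly, since with $\pi_2=0$ the caliper says nothing about geometry; this condition is the main subtlety.

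Next we use monotonicity and continuity of the variogram near the origin. Mean-square continuity with $\gamma_U(0^+)=0$ gives $\sup_{h\le r}\gamma_U(h)\to0$ as $r\to0$. Hence $\mathbb{E}[\sqrt{2\gamma_U(d_{ij(i)})}]\le \sqrt{2\sup_{h\le \bar d_n}\gamma_U(h)}\to0$, where $\bar d_n$ is the bound on matched distances. This requires no monotonicity of $\gamma_U$ globally, only the local supremum. The main obstacle is that $d_{ij(i)}$ is random and depends on $U$ through $A$. The pointwise bound sidesteps this because it is uniform in the matching map. The constant $c_{\mathrm{ov}}$ is fixed by Proposition~\ref{prop:bound}, so the term vanishes and the matching bias from confounding tends to zero.

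For the nugget case, write $\gamma_U(h)=\eta\,\mathbf{1}\{h>0\}+\gamma_c(h)$ with $\gamma_c(0^+)=0$. Matched pairs are distinct units, so $d_{ij(i)}>0$ almost surely. The same argument then gives $\mathbb{E}[\sqrt{2\gamma_U(d_{ij(i)})}]\to\sqrt{2\eta}$ by continuity of the square root, and the confounding term tends to $c_{\mathrm{ov}}|\theta_U|\sqrt{2\eta}$. To justify calling this a floor rather than an artefact of the bound, we argue that any estimator contrasting outcomes at distinct locations differences independent nugget components. For DAPS this is the case $\pi_3=0$ of the same estimator, so the limit applies verbatim.
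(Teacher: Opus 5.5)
Your proposal is correct and follows essentially the same route as the paper: force the matched distances to zero through the $\pi_2$ component of the composite caliper, use continuity of $\gamma_U$ at the origin to make the confounding term vanish, and let the nugget produce the floor. The differences are only technical. Your uniform bound via $\sup_{h\le \bar d_n}\gamma_U(h)$ replaces the paper's pointwise-plus-dominated-convergence step, and you make explicit that $\pi_2$ must stay bounded away from zero, which the paper only gestures at. In the nugget case you compute the limit $\sqrt{2\eta}$, whereas the paper uses the simpler lower bound $\gamma_U(h)\ge\eta$ for all $h>0$; that bound yields the floor at every $n$ without requiring the distances to shrink.
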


\begin{remark}[The bound justifies the composite metric]\label{rem:justify}
The three terms of \eqref{eq:bound} are exactly the three components of
$D^{\mathrm{comp}}$ in \eqref{eq:comp}, and the balance criterion \eqref{eq:balance} is
(up to standardisation) an empirical estimate of the bound. The data-driven
weight selection is therefore \emph{bound minimisation}, not an ad hoc
heuristic. It also explains the failure mode of DAPS under interference:
DAPS controls the first and second distances but leaves $\mathbb{E}|\Delta E|$
uncontrolled, so its bound carries a free interference term.
\end{remark}

All quantities in \eqref{eq:bound} are estimable: the coefficients from a
working outcome regression, and $\gamma_U$ from the fitted spatial variogram
of its residuals. The resulting plug-in diagnostic is reported for the
application in Section~\ref{sec:application}.

Both the decomposition and the bound are validated numerically in
Section~\ref{sec:sim-bound}, together with a caliper sweep that traces out
Corollary~\ref{cor:cons} directly (Figure~\ref{fig:t2}). A companion design
property; because the weights $\hat{\bm\pi}$ optimise balance, they
adapt to the strength of unmeasured confounding, abandoning the propensity
score exactly when it becomes unreliable is demonstrated in
Section~\ref{sec:weights}.

\section{recoverU$+$: doubly robust estimation with a recovered confounder}
\label{sec:recoveru}

\subsection{The estimator}

RecoverU$+$ proceeds in two stages, which have a simple reading. The first
stage draws a \emph{map of the missing confounder}: after removing what
treatment, covariates and exposure can explain, the remaining smooth
spatial variation in the outcome is attributed to $U$ and reconstructed by
kriging. This reconstruction is precisely the recoverable component
$U_{R}$ of \eqref{eq:decomp}. The second stage then treats the reconstruction
as one more covariate in an otherwise standard doubly robust analysis.

With $V = (X, E, \widehat U_{R})$, the estimator is the ATT-type doubly robust
form of \citet{moodie2018doubly},
\begin{equation}\label{eq:dr}
  \hat\tau \;=\; \frac{1}{n_1} \sum_{i=1}^{n}
     \Big[ A_i - (1 - A_i)\, w(V_i) \Big]\big( Y_i - \mu_0(V_i) \big),
  \qquad w = \frac{p}{1-p},\ \ n_1 = \textstyle\sum_i A_i ,
\end{equation}
where $p(V)$ is a working model for the adjustment probability
$\mathbb{P}(A = 1 \mid V)$ of Remark~\ref{rem:ps}, fitted by logistic
regression; $w = p/(1-p)$ is the corresponding odds; and $\mu_0(V)$ is a working model
for the control-outcome mean $\mathbb{E}[Y \mid A = 0, V]$, fitted on the control
units. The full pipeline is Algorithm~\ref{alg:recoveru}. Note the
contrast with the matching estimator: iDAPS fits its propensity score on
$X$ alone and handles exposure and location through the composite
distance, whereas here the propensity model conditions on $X$, $E$
\emph{and} $\widehat U_{R}$ directly, it is the estimator that
operationalises the identification result of
Proposition~\ref{prop:id} literally.

\begin{algorithm}[h]
\caption{recoverU$+$: doubly robust estimation with a recovered confounder}
\label{alg:recoveru}
\begin{algorithmic}[1]
\Require data $\{(Y_i, A_i, X_i, s_i)\}_{i=1}^{n}$; kernel bandwidth $\tau$
\State compute exposures $E_i \gets \sum_{j\ne i} G_{ij}(\tau) A_j$ as in
\eqref{eq:exposure}
\Statex \textit{Stage 1: recovery of the spatial confounder}
\State fit the initial outcome model $Y \sim A + X + E$ by least squares;
store residuals $r_0$
\State fit a Mat\'ern covariance (smoothness $\nu = 1/2$) with nugget to
$r_0$ by maximum likelihood, yielding the partial sill $\hat\sigma^2$,
range $\hat\theta$ and nugget $\hat\sigma^2_\epsilon$, and hence the
implied $n \times n$ covariance matrix $\widehat\Sigma_U$
\State refit the initial model by generalised least squares under
$\widehat\Sigma_U + \hat\sigma^2_\epsilon I$; store residuals $\hat r$
\State $\widehat{U_{R}} \gets \widehat\Sigma_U\,
(\widehat\Sigma_U + \hat\sigma^2_\epsilon I)^{-1}\, \hat r$
\Comment{kriging projection \citep{pokal2023improved}}
\Statex \textit{Stage 2: doubly robust estimation}
\State $V \gets (X, E, \widehat U_{R})$; fit the logistic propensity model
$p(V)$ and the control-outcome model $\mu_0(V)$ on the $A = 0$ units
\State $\hat\tau \gets$ \eqref{eq:dr} with weights $w = p/(1-p)$
\State $\widehat{\mathrm{SE}} \gets$ spatial HAC on the influence values
$\hat\psi_i$, or spatial block bootstrap (Section~\ref{sec:inference})
\Ensure $\hat\tau$, $\widehat{\mathrm{SE}}$, recovered confounder
$\widehat{U_{R}}$
\end{algorithmic}
\end{algorithm}

\noindent\emph{Computational cost.} Each Mat\'ern likelihood evaluation is
$O(n^3)$ (dense Cholesky); the recovery stage dominates the runtime, at
roughly $2$\,s for $n = 150$ and $15$\,s for $n = 300$ per fit on one core.
A practical recommendation from our experiments: fix the Mat\'ern smoothness
at $\nu = 1/2$ (exponential), as in Algorithm~\ref{alg:recoveru}, the
free four-parameter likelihood is poorly identified on weak residual fields
and can drive $\hat\nu$ to a numerical boundary (we observed
$\hat\nu > 100$).

recoverU differs only by omitting $E$ from $V$; it adjusts for confounding
but not interference. Relative to \citet{pokal2023improved}, the innovation
of recoverU$+$ is that both nuisance models condition on the exposure, so SC
and SI are handled simultaneously in a single estimator.

\subsection{Partial double robustness}

The classical justification of \eqref{eq:dr} is double robustness:
consistency if either $p$ or $\mu_0$ is correctly specified
\citep{funk2011doubly}. With a recovered confounder this \emph{cannot} hold
without qualification, because the adjustment set contains $U_{R}$, not $U$.
The correct statement is the following.

\begin{proposition}[Partial double robustness]\label{prop:pdr}
Assume \eqref{eq:outcome}, so that
$Y(0) = \mu_0(V) + \theta_U U_{A} + \tilde\varepsilon$, where
$\mu_0(V) = \theta_2^{\top} X + \theta_3 E + \theta_U U_{R}$ is the
$V$-measurable part of $Y(0)$ and
$\mathbb{E}[\tilde\varepsilon \mid V, A] = 0$, and suppose at least one of the two
working models $\{p, \mu_0\}$ is correctly specified given
$V = (X, E, U_{R})$. Then
\begin{equation}\label{eq:bua}
  \mathbb{E}[\hat\tau] - \mathrm{ADET}
  \;=\; \theta_U \Big( \mathbb{E}[U_{A} \mid A{=}1] - \mathbb{E}^{w}[U_{A} \mid A{=}0] \Big)
  \;=:\; b_{U_{A}},
  \qquad
  \mathbb{E}^{w}[\,\cdot \mid A{=}0] := \frac{\mathbb{E}[(1-A) w \,\cdot\,]}{\mathbb{E}[(1-A) w]} .
\end{equation}
In particular: (i) if $U \perp \!\!\! \perp A$ then $U_{A} \equiv 0$ and $b_{U_{A}} = 0$,
classical double robustness is recovered; and (ii)
$|b_{U_{A}}| \le |\theta_U|\, \sigma_{U_{A}}\, \kappa$, where
$\sigma_{U_{A}}^2 = \mathrm{Var}(U_{A})$ and $\kappa$ is an overlap constant
depending only on the treatment prevalence and the tail of the odds
weights (its explicit form is given in Appendix~\ref{app:pdr}).
\end{proposition}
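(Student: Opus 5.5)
The plan is to work at the population level, treating $\hat\tau$ through its probability limit. We replace $n_1^{-1}\sum_i$ by the ratio of expectations, so $\mathbb{E}[\hat\tau]$ is read as
\[
\frac{\mathbb{E}\big[\{A-(1-A)w(V)\}\{Y-\mu_0(V)\}\big]}{\mathbb{P}(A=1)} .
\]
We use the fact that the weights in \eqref{eq:dr} are balanced, $\mathbb{E}[(1-A)w]=\mathbb{E}[A]$ when $p$ is correct. We then split into the two cases of the hypothesis.

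Throughout we write $Y=AY(1,E)+(1-A)Y(0,E)$ by Assumption~\ref{a:con}. Under \eqref{eq:outcome}, $Y(1,E)-Y(0,E)=\theta_1$, and $Y(0)=\mu_0(V)+\theta_U U_A+\tilde\varepsilon$.

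\emph{Treated part.} For the treated units, $Y-\mu_0^\dagger(V)$ (with $\mu_0^\dagger$ the working model) equals
\[
\theta_1+\{\mu_0(V)-\mu_0^\dagger(V)\}+\theta_U U_A+\tilde\varepsilon .
\]
Taking $\mathbb{E}[\,\cdot\mid A=1]$ and using $\mathbb{E}[\tilde\varepsilon\mid V,A]=0$ gives
\[
\mathrm{ADET}+\mathbb{E}[\mu_0-\mu_0^\dagger\mid A=1]+\theta_U\,\mathbb{E}[U_A\mid A=1].
\]

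\emph{Control part.} The control part contributes
\[
-\frac{\mathbb{E}\big[(1-A)w\{\mu_0-\mu_0^\dagger+\theta_U U_A\}\big]}{\mathbb{P}(A=1)} .
\]

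\emph{Case 1: $\mu_0^\dagger=\mu_0$.} The $\mu_0$-mismatch terms vanish identically. The remainder is $\theta_U\big(\mathbb{E}[U_A\mid A=1]-\mathbb{E}[(1-A)wU_A]/\mathbb{P}(A=1)\big)$. This matches \eqref{eq:bua} once the denominator is recognised as $\mathbb{E}[(1-A)w]$. That identification is exact if $p$ is correct, and otherwise holds by taking the normalised (Hájek) form of the weights; I will state the result with the normalised $\mathbb{E}^w$ as in the proposition.

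\emph{Case 2: $p$ correct.} Here $\mathbb{E}[(1-A)w\,g(V)]=\mathbb{E}[p(V)g(V)]=\mathbb{E}[A\,g(V)]$ for any $V$-measurable $g$. So the $\mu_0-\mu_0^\dagger$ terms cancel between the treated and control parts, leaving again $b_{U_A}$. This step is the standard ATT double-robustness computation of \citet{moodie2018doubly}.

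The main obstacle is that $U_A$ is \emph{not} $V$-measurable, so it cannot be absorbed into either cancellation. This is exactly why the residual survives in both cases. We must also be careful that $\mathbb{E}[\tilde\varepsilon\mid V,A]=0$ is what lets the noise drop out under reweighting by $w(V)$.

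For (i): if $U\perp\!\!\!\perp A$, the projection of $U$ onto the closed span of the centred treatment field is zero. Hence $U_A\equiv0$ by \eqref{eq:decomp}, and $b_{U_A}=0$.

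For (ii): since $U_A$ is centred, $\mathbb{E}[U_A]=0$. We write
\[
b_{U_A}=\theta_U\,\mathbb{E}\Big[U_A\Big\{\frac{A}{\pi}-\frac{(1-A)w}{\mathbb{E}[(1-A)w]}\Big\}\Big],
\qquad \pi=\mathbb{P}(A=1).
\]
Applying Cauchy--Schwarz gives $|b_{U_A}|\le|\theta_U|\sigma_{U_A}\kappa$ with
\[
\kappa=\Big\{\mathbb{E}\big[(A/\pi-(1-A)w/\mathbb{E}[(1-A)w])^2\big]\Big\}^{1/2}.
\]
This simplifies to
\[
\kappa^2=\frac{1}{\pi}+\frac{\mathbb{E}[(1-A)w^2]}{\{\mathbb{E}[(1-A)w]\}^2},
\]
which depends only on prevalence and the second moment (tail) of the odds weights. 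This is the explicit form I would give in Appendix~\ref{app:pdr}; positivity (Assumption~\ref{a:pos}) keeps it finite.
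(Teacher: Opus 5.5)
Your argument follows the paper's proof in all essentials: the population version of \eqref{eq:dr}, the substitution $Y(0)=\mu_0(V)+\theta_U U_{A}+\tilde\varepsilon$, the odds-weighting identity $\mathbb{E}[(1-A)w\,g(V)]=\mathbb{E}[A\,g(V)]$ when $p$ is correct, the survival of $\theta_U U_{A}$ because it is not $V$-measurable, and the projection argument for (i). Two of your deviations are improvements worth keeping.

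First, your Case~1 caveat is correct, and the paper glosses over it. Its last step turns $\pi_1^{-1}\mathbb{E}[(1-A)wU_{A}]$ into $\mathbb{E}^{w}[U_{A}\mid A=0]$ via $\mathbb{E}[(1-A)w]=\pi_1$ ``at the true $p$''. That identity is unavailable when only the outcome model is correct. For the estimator exactly as in \eqref{eq:dr} (normalised by $n_1$), the outcome-correct residual is $\theta_U\big(\mathbb{E}[U_{A}\mid A=1]-\mathbb{E}[(1-A)wU_{A}]/\pi_1\big)$. This equals $b_{U_{A}}$ only if $\mathbb{E}[(1-A)w]=\pi_1$ or $\mathbb{E}[(1-A)wU_{A}]=0$. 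State this explicitly rather than appealing to the H\'ajek form, which is a different estimator.

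Second, for (ii) the paper applies Cauchy--Schwarz to the two terms separately. It obtains $\kappa=\pi_1^{-1/2}+\pi_1^{-1}\{\mathbb{E}[(1-A)w^2]\}^{1/2}$, again with $\pi_1$ standing in for $\mathbb{E}[(1-A)w]$. You instead apply Cauchy--Schwarz once to $h=A/\pi-(1-A)w/\mathbb{E}[(1-A)w]$, where $A(1-A)=0$ removes the cross term. This gives $\kappa^2=\pi^{-1}+\mathbb{E}[(1-A)w^2]/\{\mathbb{E}[(1-A)w]\}^2$, which buys three things:
\begin{itemize}
\item when $\mathbb{E}[(1-A)w]=\pi$ it is never larger than the paper's constant, since $\sqrt{a+b}\le\sqrt a+\sqrt b$;
\item it holds for the normalised $\mathbb{E}^{w}$ without assuming $p$ is correct;
\item since $\mathbb{E}[h]=0$, it does not even require $U_{A}$ to be centred.
\end{itemize}
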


A complete proof, with the population algebra written out term by term, is
given in Appendix~\ref{app:pdr}.

Three readings. First, this is the more precise replacement for an unqualified
``doubly robust'' label: robustness holds \emph{with respect to the feasible
adjustment set}, up to an explicit residual. Second, $b_{U_{A}}$ is
simultaneously the identification gap of Proposition~\ref{prop:id}(b) and the
asymptotic bias of $\hat\tau$. Third, under a probit treatment model
$b_{U_{A}}$ has a closed form in the correlation and scale of $U_{A}$ (by
standard inverse-Mills-ratio algebra for a jointly Gaussian confounder and
latent treatment index), anchoring a sensitivity analysis in the spirit of
the E-value \citep{vanderweele2017}; developing that sensitivity analysis
is beyond the scope of this paper and is left to future work.

Proposition~\ref{prop:pdr} is validated numerically in
Section~\ref{sec:sim-pdr}, including a direct check of claim~(i) and an
oracle comparison isolating the unrecoverable component
(Figure~\ref{fig:t3}).

\subsection{Inference under spatial dependence}\label{sec:inference}

Estimator \eqref{eq:dr} is asymptotically linear with influence values
$\hat\psi_i = [A_i - (1-A_i)\hat w_i](Y_i - \hat\mu_0(V_i))/\hat\pi_1 -
A_i \hat\tau/\hat\pi_1$, where $\hat\pi_1 = n_1/n$ is the treated
fraction. The conventional standard error
$\mathrm{sd}(\hat\psi_i)/\sqrt{n}$ treats the $\hat\psi_i$ as independent;
in spatial data they are positively correlated and the resulting intervals
are too narrow. We implement three remedies in \textsf{spaci}.

\emph{Spatial HAC (heteroskedasticity- and autocorrelation-consistent)
variance} \citep{conley1999}:
$\widehat{\mathrm{Var}}(\hat\tau) = n^{-2} \sum_{i}\sum_{j}
K(\lVert s_i - s_j\rVert / b)\, \hat\psi_i \hat\psi_j$, with a Bartlett
kernel and bandwidth $b$ chosen from the distance at which the empirical
autocorrelation of the $\hat\psi_i$ decays below $0.1$.

\emph{Spatial block bootstrap:} square spatial blocks (side
$\propto \mathrm{diam}(D)\, n^{-1/4}$) resampled with replacement. For the
doubly robust estimators, the bootstrap resamples blocks of \emph{influence
values}; re-running the full pipeline on resampled coordinates is unsound
here, because resampling duplicates locations and the kriging recovery
over-smooths them, spuriously \emph{shrinking} the variance a problem
we document so that others avoid it. For the matching estimators, which have
no kriging stage, the pipeline is re-run on each resample with exposures
recomputed.

\emph{Conditional randomisation test} for the sharp null of no direct
effect: treatment is redrawn from the fitted propensity model, exposures
recomputed, and the statistic re-evaluated \citep{basse2019}; a design-based
$p$-value that does not rest on variance estimation.

The finite-sample calibration of all three procedures is assessed by
simulation in Section~\ref{sec:sim-calib}. Two findings from that study
frame everything that follows. In the regime the methods target
meaningful spatial confounding, the independence-based standard error is
anticonservative while the spatial HAC and block bootstrap are
approximately calibrated to conservative, so the practical recommendation
is unambiguous: \emph{report HAC or block-bootstrap intervals, and avoid the
i.i.d.\ standard error}. And when confounding is \emph{weak}, all
influence-based standard error, including the three above -
undercover, for a structural reason that no functional of the influence values can repair; the mechanism, i.e a generated-regressor problem and
the most practical correction is set out in Section~\ref{sec:discussion}.

\section{Simulation studies}\label{sec:sims}

\subsection{Software and simulation design}\label{sec:software}

All of the methodology above is implemented in the open-source R package
\textsf{spaci} (\url{https://github.com/Ogunsolaia/spaci}), which provides the
five estimators (\texttt{iDAPS()}, \texttt{recoverUplus()} and the
comparators \texttt{naive\_PS()}, \texttt{DAPS()}, \texttt{recoverU()}),
the exposure and recovery machinery, the spatial inference procedures of
Section~\ref{sec:inference} (\texttt{vcov\_hac()}, \texttt{boot\_spatial()},
\texttt{rand\_test()}), the bias-bound diagnostic of
Proposition~\ref{prop:bound} (\texttt{bias\_bound()}), deterministic optimal
matching as an alternative to seeded greedy matching, the fixed-$\nu$
recovery option of Algorithm~\ref{alg:recoveru}, the simulator used
throughout this section, and plotting utilities. All estimators share a
common interface, so a complete analysis is a few lines:

\begin{verbatim}
R> library(spaci)
R> fit <- recoverUplus(Y, A, X, coords, tau = 0.2)   # doubly robust
R> vcov_hac(fit)                                     # spatial HAC interval
R> m <- idaps(Y, A, X, coords, tau = 0.2,            # matching
+             caliper = 0.25, seed = 1)
R> bias_bound(m, Y, A, X, coords)                    # Prop. 2 diagnostic
R> boot_spatial(Y, A, X, coords, method = "idaps")   # block bootstrap
\end{verbatim}

\noindent The package passes \texttt{R CMD check} cleanly and ships the
application data with documentation; every experiment in this section, and
the application of Section~\ref{sec:application}, runs through it, and a
separate public reproduction repository provides the scripts that
regenerate every table and figure in this paper (see \emph{Code
availability}). Computational profiles were given alongside
Algorithms~\ref{alg:idaps} and~\ref{alg:recoveru}, and scalability options
for large $n$ are discussed in Section~\ref{sec:discussion}.

The common data-generating process is as follows. Units are placed
uniformly on $[0,1]^2$ with $n = 250$. Covariates
$X_{1} \sim N(0,1)$, $X_{2} \sim \mathrm{Bernoulli}(0.5)$; the unmeasured
confounder $U$ is a mean-zero Gaussian random field with exponential
covariance (variance $1$, range $0.2$); treatment follows
$A_i \sim \mathrm{Bernoulli}\big(\mathrm{logit}^{-1}\{0.1 + 0.1 X_{i1} +
0.2 X_{i2} + u\, U(s_i)\}\big)$, with $u$ the spatial-confounding strength;
exposure $E_i$ uses \eqref{eq:exposure} with $\tau = 0.1$; and outcomes
follow \eqref{eq:outcome} with
$(\theta_1, \theta_{2i}, \theta_3, \theta_U), = (2.0, (1.0, 0.5), 1.5, 0.4)$ where $i=1,2 $, so
$\mathrm{ADET} = 2$. Each configuration uses $1{,}000$ replicates. Five estimators
are compared: naive propensity-score (PS) matching, DAPS, iDAPS, recoverU and
recoverU$+$. (The proposition-validation studies of
Sections~\ref{sec:sim-bound} and~\ref{sec:sim-pdr} use the same simulator
with confounder loadings up to $\theta_U = 2$ so that the recovery stage is
well identified across the grid; full configurations are in the
reproducibility materials.) The confounder loading $\theta_U = 0.4$ keeps
the residual confounder field deliberately weak. If anything, a handicap
for the recovery-based estimators. The validation studies increase it
so that the recovery stage has signal to work with an honest
prerequisite, since with a weak residual field the Mat\'ern fit is barely
identified (Section~\ref{sec:recoveru}). One comparability remark: the naive method
ignores exposure entirely and so targets a quantity that mixes direct and
spillover components; comparing all five methods against $\mathrm{ADET} = 2$ is
coherent here only because the direct effect is constant in the
data-generating process. In applications with heterogeneous effects the
naive method would differ in estimand, not merely in bias.

\subsection{Validating the bias bound (Proposition~\ref{prop:bound})}
\label{sec:sim-bound}

This and the next subsection directly validate the two headline guarantees; Propositions~\ref{prop:bound} and~\ref{prop:pdr} before the
head-to-head comparison and stress tests that follow. Both reuse the
geostatistical simulator above, continuous coordinates on $[0,1]^2$, and
an unmeasured confounder $U(s_i)$ drawn as a mean-zero Gaussian random field
with exponential covariance (range $0.2$, variance $1$), obtained by
Cholesky-factorising the $n\times n$ covariance matrix evaluated at the
sampled coordinates and multiplying by independent standard normals. So
every replicate is a fresh continuous spatial field, not a discretised
approximation to one.

Because the working outcome model~\eqref{eq:outcome} is linear and, in
simulation, the true coefficients and the true confounder $U$ are known, the
matched-pair bias decomposition~\eqref{eq:decomp-bias} can be evaluated
\emph{exactly} rather than merely estimated, which is what makes this a
validation of the proposition rather than an application of the diagnostic.
For each replicate: draw one data set, fit iDAPS (Algorithm~\ref{alg:idaps},
$\tau=0.1$, caliper $0.25$), and on the resulting matched pairs compute (a)
the right-hand side of~\eqref{eq:decomp-bias} using the true $\theta_2,
\theta_3,\theta_U$ and the true within-pair differences $\Delta X, \Delta E,
\Delta U$; (b) the true bound~\eqref{eq:bound}, using the \emph{population}
semivariogram $\gamma_U(h) = 1 - e^{-h/0.2}$ of the exponential field
evaluated at the matched distances; and (c) the \emph{plug-in} diagnostic
that a practitioner could compute from data alone; coefficients from a
fitted working regression and $\gamma_U$ from the fitted exponential
variogram of its residuals (the same computation as \texttt{bias\_bound()}
in \textsf{spaci}, used on the real data in
Section~\ref{sec:application}). Sweeping a $3\times3$ grid of confounder
loading $\theta_U \in \{0.4,1,2\}$ (the outcome-model coefficient on $U$)
and interference strength $\theta_3 \in \{0,1.5,3\}$, with the
treatment-model confounding strength held fixed at $u = 2$, gives $9$
design points;
$200$ replicates per design point at $n=250$ yield $1{,}800$ replicates in
total.

The results confirm every part of the proposition. The exact decomposition
held to Monte-Carlo error ($\le 0.02$ at every design point), confirming the
algebra of~\eqref{eq:decomp-bias} rather than just its consequence. The true
bound~\eqref{eq:bound} exceeded the realised $|$bias$|$ in \textbf{100\%} of
the $1{,}800$ replicates. The confounding term dominated the other two
(covariate and exposure imbalances were driven to $\approx 0.06$ by the
matching, so essentially all residual bias is unmeasured-confounder
imbalance);  the regime the bound is designed for. The selection constant
behaves exactly as the proof anticipates: \emph{on average} the confounding
term with $c_{\mathrm{ov}} = 1$ already dominates the confounder
contribution at all nine design points, but per replicate the matching map
correlates with $U$ (treated units carry systematically larger $U$), so
the realised within-pair differences fluctuate above the population
variogram term in roughly $30\%$ of individual replicates, with a
$95$th-percentile ratio of about $1.2$, which is the empirical origin used throughout the paper i.e. $c_{\mathrm{ov}} = 1.2$ . Finally, the
\emph{plug-in} diagnostic (c), the one available in
practice, with no knowledge of the truth, recovered $94\%$ of the true
confounding term, so the applied diagnostic reported in
Section~\ref{sec:application} is not merely a bound in principle but a
reasonably tight one in practice.

A separate caliper sweep isolates Corollary~\ref{cor:cons}: at fixed
$\theta_U = 2$, $\theta_3 = 1.5$, iDAPS is refitted at calipers
$\{0.05, 0.1, 0.15, 0.25, 0.5, \infty\}$, where $\infty$ means no caliper
--- every treated unit is matched to its nearest control, however distant
($200$ replicates each, $n=250$). Figure~\ref{fig:t2} (left) shows the
mechanism at work: as the caliper tightens and the mean matched distance
falls $0.24 \to 0.10$, the confounding bound term, the actual within-pair
confounder contribution and the realised bias all fall together, the
realised bias from $2.69$ to $1.40$ and the right panel confirms that
the total estimated bound dominates the realised bias at every caliper.
Matching nearby is what removes spatial confounding, exactly as the
consistency argument predicts. Two honest footnotes to the sweep: the
price of a tighter caliper is matched-sample size (from $117$ to $41$
matched treated units across the sweep), and the bias does not reach zero
at fixed $n$ because nearest-neighbour distances floor at $\approx 0.06$
when $n = 250$, the corollary's limit requires the matched distances to
shrink \emph{with} $n$, not at fixed $n$.

\begin{figure}[H]\centering
  \includegraphics[width=0.95\linewidth]{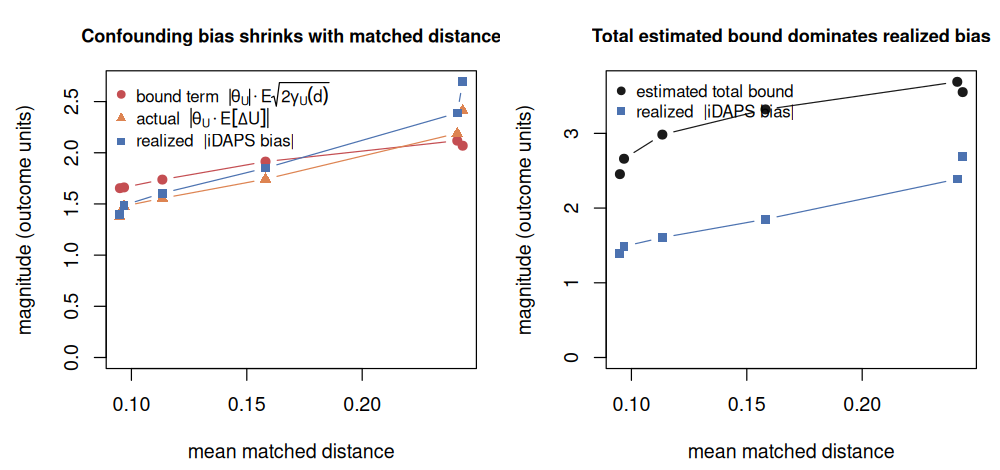}
  \caption{Validation of Proposition~\ref{prop:bound} and
  Corollary~\ref{cor:cons}. \emph{Left:} the confounding bound term, the
  actual within-pair confounder contribution, and the realised iDAPS bias
  all shrink together as units are matched more closely. \emph{Right:} the
  total estimated bound dominates the realised bias across all calipers.}
  \label{fig:t2}
\end{figure}

\subsection{Validating partial double robustness (Proposition~\ref{prop:pdr})}
\label{sec:sim-pdr}

The same simulator is used to isolate the two ingredients of
Proposition~\ref{prop:pdr}: the outcome-model confounder loading $\theta_U$
and, separately, the strength $u$ with which the confounder drives
\emph{treatment} (the $u\,U(s_i)$ term inside the treatment logit of
Section~\ref{sec:software}; $u = 0$ means $U \perp \!\!\! \perp A$ and classical double
robustness should be exactly recovered). Crossing
$\theta_U \in \{0.4,1,2\}$ with
$u \in \{0,0.5,1,2,3\}$ gives $15$ design points; $200$ replicates
per design point at $n=250$. Each replicate runs the full recoverU$+$
pipeline of
Algorithm~\ref{alg:recoveru} (initial linear fit, exponential Mat\'ern MLE
on its residuals, GLS refit, kriging projection to $\widehat{U_{R}}$, then the
doubly robust estimate~\eqref{eq:dr} with $V = (X, E, \widehat{U_{R}})$), and
records the realised bias $\hat\tau - \mathrm{ADET}$.

Because the simulation knows the true $U$, the analytic residual $b_{U_{A}}$
of~\eqref{eq:bua} can be computed alongside the estimate, by decomposing $U$
into the part linearly explained by the adjustment set $(X, E,
\widehat{U_{R}})$ and an orthogonal remainder $r_U$, the empirical analogue
of $U_{A}$ and evaluating
$b_{U_{A}} = \theta_U\big(\bar r_U^{\,\text{treated}} -
\bar r_U^{\,\text{odds-weighted control}}\big)$ exactly as
in~\eqref{eq:bua}. Across all $3{,}000$ replicates the realised bias tracked
this analytic $b_{U_{A}}$ with per-replicate correlation $0.94$ (regression
slope $0.90$ for the design-point means against the $45^\circ$ identity),
directly
confirming the formula rather than just its qualitative implications. Claim
(i) is checked by holding $\theta_U$ fixed and driving $u \to 0$: as
the confounder ceases to influence treatment, $U_{A} \to 0$ and the bias
should vanish, which it does (mean $|\mathrm{bias}| = 0.011$ at the
$u = 0$ design points, statistically indistinguishable from zero).
Finally, a
model-free check that the residual bias is genuinely the \emph{unrecoverable}
component and not, say, an artefact of the kriging step: rerunning the
identical pipeline with the \emph{true, full} $U(s)$ substituted for
$\widehat{U_{R}}$ (an oracle no real analysis has access to) is unbiased at
every one of the $15$ design points (mean $|\mathrm{bias}| = 0.019$, at
Monte-Carlo
noise level), regardless of $u$ or $\theta_U$, confirming that
once the adjustment set genuinely spans $U$, ordinary double robustness
holds without qualification, and that the bias observed with the recovered
$\widehat{U_{R}}$ traces entirely to the $U_{A}$ that recovery cannot see
(Figure~\ref{fig:t3}). (The two summaries, $0.011$ and $0.019$, average
over different subsets, the three $u = 0$ design points versus
all fifteen and both sit within Monte-Carlo error of zero, so their
ordering carries no information.)

\begin{figure}[H]\centering
  \includegraphics[width=0.95\linewidth]{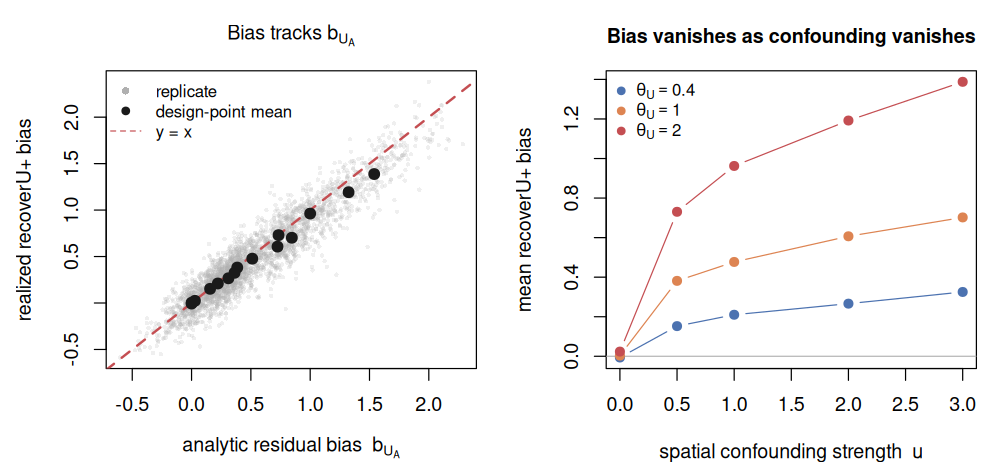}
  \caption{Validation of Proposition~\ref{prop:pdr}. \emph{Left:} realised
  recoverU$+$ bias against the analytic residual $b_{U_{A}}$; grey points are
  replicates, black points design-point means, dashed line the identity.
  \emph{Right:} bias vanishes as the confounder ceases to drive treatment
  and scales with the confounder's outcome loading $\theta_U$.}
  \label{fig:t3}
\end{figure}

\subsection{The weights adapt to confounding strength}\label{sec:weights}

The next experiment probes the balance-optimised weights of
Section~\ref{sec:idaps} themselves, rather than the resulting estimate.
Crossing the treatment-model confounding strength $u \in \{0, 0.5, 1, 2\}$ with
interference strength $\theta_3 \in \{0, 1.5, 3\}$ ($160$ replicates per
design point, $n = 250$), each replicate fits iDAPS and records the
selected weights $\hat{\bm\pi}$; Table~\ref{tab:weights} reports their
means by $u$, averaged over the interference settings.

Because $\hat{\bm\pi}$ optimises balance (a design quantity), it responds
to the amount of unmeasured confounding that must be absorbed, not to the
outcome model. As $u$ increases, the
propensity-score weight $\pi_1$ \emph{collapses} (correlation $-0.92$ with
$u$) while the spatial and exposure weights take over ($\pi_3$ correlation
$+1.00$); the weights are essentially invariant to the interference
\emph{outcome} coefficient (correlation $\approx 0$), as expected of a
design-based rule. iDAPS automatically stops trusting the propensity score built from the \emph{measured} covariates precisely when unmeasured
confounding renders it unreliable.

\begin{table}[h]
\centering
\caption{Mean optimised iDAPS weights by spatial confounding strength $u$
(averaged over interference strengths; $160$ replicates per design point,
$n=250$).}
\label{tab:weights}
\begin{tabular}{lccc}
\toprule
 & $\pi_1$ (propensity) & $\pi_2$ (spatial) & $\pi_3$ (exposure) \\
\midrule
$u = 0$ & 0.42 & 0.42 & 0.16 \\
$u = 0.5$ & 0.31 & 0.53 & 0.17 \\
$u = 1$ & 0.13 & 0.54 & 0.33 \\
$u = 2$ & 0.08 & 0.38 & 0.54 \\
\bottomrule
\end{tabular}
\end{table}

\subsection{Calibration of the inference procedures}\label{sec:sim-calib}

This experiment asks whether the standard errors of
Section~\ref{sec:inference} are the right size. Data are drawn from the
same simulator with the treatment-model confounding strength swept over
$u \in \{0, 0.5, 1, 2\}$ at $n = 150$, plus one $u = 0$ configuration at
$n = 300$ to separate small-sample from structural effects. recoverU$+$ is
fitted to each of $300$--$400$ replicates per configuration, and each
replicate records the i.i.d.\ influence-function standard error, the
spatial HAC standard error, and the spatial block-bootstrap standard error
($400$ resamples). Calibration is then summarised in two ways in
Table~\ref{tab:coverage}: the ratio of the mean estimated standard error to
the true Monte-Carlo standard deviation of the estimator across replicates
(ratio $1$ = correctly sized), and the empirical coverage of nominal
$95\%$ intervals. Coverage is assessed for the estimator's
\emph{probability limit} rather than for the true $\mathrm{ADET}$ in the
confounded configurations: Proposition~\ref{prop:pdr} already characterises the bias
there, and an interval can only be expected to cover what the estimator
converges to. Conflating the two would make every interval look bad for
a reason that has nothing to do with its width being right, which is the
question this experiment isolates.

In the regime the method targets meaningful spatial confounding,
the independence-based standard error remains anticonservative (ratios
$0.85$--$0.93$) while the spatial HAC and block bootstrap move from
approximately calibrated to conservative (ratios $1.26$ and $1.31$,
coverage $0.98$, at the strongest confounding). This is the basis for the
recommendation of Section~\ref{sec:inference}: report HAC or
block-bootstrap intervals, and avoid the i.i.d.\ standard error.

The weak-confounding rows ($u = 0$) reveal a genuine failure: \emph{all}
influence-based standard errors undercover (SE ratios $\approx 0.55$,
coverage $\approx 0.75$), and the shortfall does not shrink from $n = 150$
to $n = 300$, the signature of a missing variance component rather than
a small-sample artefact. We do not hide this: its mechanism and the most
practical correction are analysed in Section~\ref{sec:discussion}.

\begin{table}[t]
\centering
\caption{Standard-error calibration and coverage for recoverU$+$ across
spatial confounding strength $u$ ($n = 150$ unless stated; $300$--$400$
replicates per configuration; block bootstrap with $400$ resamples).
Ratio $=$ mean estimated SE $\div$ true Monte-Carlo SD; coverage is for
the probability limit at nominal $0.95$.}
\label{tab:coverage}
\begin{tabular}{lcccccc}
\toprule
 & \multicolumn{3}{c}{SE ratio} & \multicolumn{3}{c}{Coverage} \\
\cmidrule(lr){2-4}\cmidrule(lr){5-7}
$u$ & i.i.d. & HAC & bootstrap & i.i.d. & HAC & bootstrap \\
\midrule
0.0 & 0.58 & 0.57 & 0.54 & 0.75 & 0.74 & 0.68 \\
0.5 & 0.85 & 0.86 & 0.86 & 0.89 & 0.89 & 0.88 \\
1.0 & 0.93 & 1.04 & 1.12 & 0.93 & 0.93 & 0.96 \\
2.0 & 0.91 & 1.26 & 1.31 & 0.93 & 0.98 & 0.98 \\
\midrule
0.0 ($n{=}300$) & 0.60 & 0.60 & 0.57 & 0.79 & 0.79 & 0.75 \\
\bottomrule
\end{tabular}
\end{table}

\subsection{Comparison of the five estimators}
\label{sec:sim-headtohead}

The four preceding experiments each isolated a single theoretical claim
under conditions chosen to test it. The remaining two ask the practical
questions. This one is the headline comparison: all five estimators;
naive propensity-score matching, DAPS, iDAPS, recoverU and recoverU$+$
are run on the \emph{same} simulated data sets, drawn from the
data-generating process of Section~\ref{sec:software} exactly as stated
there (outcome confounder loading $\theta_U = 0.4$, the deliberately weak
regime that, if anything, handicaps the recovery-based estimators), with
the treatment-model confounding strength swept
over $u \in \{0.5, 1, 1.5, 2\}$ and $1{,}000$ replicates per
configuration. It answers the question an applied reader ultimately cares
about: when spatial confounding and interference operate together, how
much accuracy does each successive refinement buy?

Table~\ref{tab:sim} reports the bias and mean squared error in parentheses. Three
patterns are stable. First,
all methods improve as confounding weakens. Second, the two proposed
estimators dominate their single-problem antecedents: iDAPS improves on DAPS
exactly as Remark~\ref{rem:justify} predicts, and recoverU$+$ improves on
recoverU for the same structural reason. Third, recoverU$+$ has the lowest MSE across the simulated confounding strengths considered. At the strongest confounding, its MSE ($0.109$) is roughly a
fifth of the naive method's ($0.566$) and its remaining bias is the
$b_{U_{A}}$ of Proposition~\ref{prop:pdr}, as validated in
Figure~\ref{fig:t3}. The methods rank
recoverU$+$ $\succ$ iDAPS $\succ$ recoverU $\succ$ DAPS $\succ$ naive PS.

\begin{table}[h]
\centering
\caption{Head-to-head comparison of the five estimators across spatial
confounding strength $u$, ($n = 250$, $1{,}000$ replicates; true
$\mathrm{ADET} = 2$, so the mean estimate is $2 + $ bias).}
\label{tab:sim}
\begin{tabular}{lcccc}
\toprule
Method & $u=2.0$ & $u=1.5$ & $u=1.0$ & $u=0.5$ \\
\midrule
Naive PS
& 0.733 (0.566)
& 0.633 (0.430)
& 0.467 (0.247)
& 0.263 (0.098) \\

DAPS
& 0.693 (0.508)
& 0.591 (0.376)
& 0.433 (0.215)
& 0.240 (0.085) \\

iDAPS
& 0.445 (0.232)
& 0.406 (0.198)
& 0.327 (0.138)
& 0.214 (0.073) \\
recoverU
& 0.694 (0.507)
& 0.592 (0.374)
& 0.422 (0.202)
& 0.217 (0.069) \\

recoverU$+$
& \textbf{0.289 (0.109)}
& \textbf{0.269 (0.094)}
& \textbf{0.220 (0.070)}
& \textbf{0.161 (0.045)} \\
\bottomrule
\end{tabular}
\end{table}

\subsection{Double robustness and exposure misspecification}
\label{sec:misspec}

Two further experiments probe the framework where it could fail.

\emph{Double robustness.} We add a shared nonlinearity, the
treatment model gains $0.45(X_1^2 - 1)$ inside the logit and the outcome
gains $0.8 X_1^2$ and set $u = 0$ so that the unrecoverable component
vanishes and Proposition~\ref{prop:pdr} predicts \emph{exact} double
robustness: unbiasedness whenever at least one nuisance specification
includes the $X_1^2$ term, bias only when both omit it. ``Outcome model
correct'' means the outcome specification used both in the
recovery stage and as $\mu_0$ includes $X_1^2$; ``propensity model
correct'' means the logistic model includes it. Table~\ref{tab:dr} reports
the four cells ($n = 250$, $150$ replicates). The prediction is confirmed
with one instructive nuance: when both models are correct or only the
propensity model is misspecified, the estimator is unbiased ($|$bias$|
\le 0.017$); when both are misspecified, it is clearly biased ($0.55$); and
when only the \emph{outcome} model is misspecified the estimator is
asymptotically protected by the correct propensity model but finite-sample
fragile. The mean bias of $0.23$ comes with a tripled standard deviation
($0.71$), the signature of extreme inverse-probability weights when
protection must come from weighting alone. This is precisely the phenomenon
documented by \citet{kang2007}, reproduced here in the spatial setting, and
it reinforces the practical advice that the outcome model deserves the
greater specification effort.

\begin{table}[h]
\centering
\caption{Double-robustness stress test for recoverU$+$ ($u = 0$, $n = 250$,
$150$ replicates; true $\mathrm{ADET} = 2$). A shared nonlinearity ($X_1^2$) drives
both treatment and outcome; each nuisance specification either includes it
(correct, C) or omits it (misspecified, M).}
\label{tab:dr}
\begin{tabular}{lcccc}
\toprule
Cell & Propensity model & Outcome model & Bias & SD \\
\midrule
CC & correct & correct & $0.017$ & $0.23$ \\
MC & misspecified & correct & $0.011$ & $0.22$ \\
CM & correct & misspecified & $0.226$ & $0.71$ \\
MM & misspecified & misspecified & $0.545$ & $0.27$ \\
\bottomrule
\end{tabular}
\end{table}

\emph{Exposure-mapping misspecification.} The estimators always use the
exponential kernel \eqref{eq:exposure} with $\tau = 0.1$, while the
data-generating exposure is (a) the same (correct), (b) an exponential
kernel with $\tau = 0.05$ (bandwidth wrong), or (c) the fraction of treated
units among the five nearest neighbours (functional form wrong), at
$u = 0$ and interference strength $1.5$. Both recoverU$+$ and iDAPS are
essentially unbiased in all three cases ($|$bias$| \le 0.016$ throughout;
Table~\ref{tab:expo}): because the working exposure remains highly
correlated with the true one, misspecification mainly inflates residual
noise rather than biasing the direct effect, consistent with the
robustness perspective of \citet{savje2024}. We caution that this
conclusion is specific to direct effects; spillover estimands, which live
on the exposure margin itself, would be more sensitive.

\begin{table}[t]
\centering
\caption{Exposure-mapping misspecification ($u = 0$, interference strength
$1.5$, $n = 250$, $150$ replicates; true $\mathrm{ADET} = 2$). Estimators always use
the exponential kernel with $\tau = 0.1$; the data-generating exposure
varies by row.}
\label{tab:expo}
\begin{tabular}{lcccc}
\toprule
 & \multicolumn{2}{c}{recoverU$+$} & \multicolumn{2}{c}{iDAPS} \\
\cmidrule(lr){2-3}\cmidrule(lr){4-5}
Data-generating exposure & Bias & SD & Bias & SD \\
\midrule
Exponential, $\tau = 0.1$ (correct) & $-0.015$ & $0.25$ & $0.004$ & $0.26$ \\
Exponential, $\tau = 0.05$          & $\phantom{-}0.008$ & $0.25$ & $0.004$ & $0.26$ \\
Five-nearest-neighbour fraction     & $-0.006$ & $0.25$ & $0.006$ & $0.27$ \\
\bottomrule
\end{tabular}
\end{table}

\section{Effect of SCR/SNCR technology on ambient ozone}
\label{sec:application}

Ambient ozone is one of the air pollutants most associated with human
disease. To reduce ozone formation resulting from nitrogen-oxide
(NO$_x$) emissions, selective catalytic and non-catalytic reduction
(SCR/SNCR) technologies have been widely adopted at power-generating
facilities. Whether SCR/SNCR use reduces ambient ozone has been investigated
under spatial confounding alone
\citep{papadogeorgou2019, pokal2023improved}; those analyses are limited in
that ozone formation at one location is not confined to that location,
wind transport and atmospheric chemistry spread NO$_x$ over nearby
distances, so emissions at one facility influence ozone at neighbouring
locations \citep{Lu2019, Qu2024}. Untreated locations close to treated ones
may thus experience reductions through interference. This is exactly the
joint SC--SI setting of this paper: neglecting the interference may over- or
under-state the effect of SCR/SNCR and mislead decision-making.

We analyse the $n = 473$ facilities ($152$ treated, $321$ control) of
\citet{DVN/DKXXSN_2016}, previously analysed by \citet{papadogeorgou2019}
and \citet{pokal2023improved}, with fourth-highest daily ozone as the
outcome, SCR/SNCR installation as treatment, $18$ measured covariates, and
facility coordinates; total NO$_x$ emissions, a mediator, is excluded from
the adjustment set. (The distributed data record ozone in ppm; all analyses
here use ppb, the raw outcome multiplied by $10^3$.) We use $\tau = 0.2$ and
caliper $0.25$; sensitivity to both is reported below. Matching rows use
greedy matching at a declared seed; every value in Table~\ref{tab:ozone} is
reproducible from a single documented function call in \textsf{spaci}, and the
scripts in the reproduction repository regenerate the full table.
Uncertainty is quantified as prescribed in
Sections~\ref{sec:idaps} and~\ref{sec:inference}: the matching rows use the
matched-pairs standard error
$\mathrm{sd}\{Y_i - Y_{j(i)}\}/\sqrt{m_T}$, and the doubly robust rows use
the spatial HAC variance computed on the influence values (Bartlett kernel;
data-driven bandwidth, here $b = 2.3^\circ$), with normal critical values
throughout. On these data, the HAC and i.i.d.\ influence-function standard
errors coincide to two decimals (recoverU: $0.347$ vs $0.345$; recoverU$+$:
$0.439$ vs $0.439$), indicating negligible spatial autocorrelation of the
influence values at the observed inter-facility distances; a property of
this application's continental geography, not a general licence for
i.i.d.\ standard errors (Table~\ref{tab:coverage}).

\begin{table}[h]
\centering
\caption{Estimated effect of SCR/SNCR on ambient ozone (ppb), with $95\%$
intervals. Matching rows: greedy matching at the declared seed ($115$), with
matched-pairs standard errors (matched-pair differencing partially cancels
the spatial dependence). Doubly robust rows are deterministic given the data
and use spatial HAC intervals; here the HAC and i.i.d.\ standard errors
coincide to two decimals (HAC bandwidth $2.3^\circ$), indicating negligible
residual spatial correlation of the influence values at the observed
inter-facility distances.}
\label{tab:ozone}
\begin{tabular}{lcc}
\toprule
Method & Estimate & $95\%$ interval \\
\midrule
Naive PS    & \phantom{$-$}1.94 & $(0.22,\ 3.67)$ \\
DAPS        & \phantom{$-$}0.61 & $(-0.81,\ 2.02)$ \\
iDAPS       & $-0.50$ & $(-1.66,\ 0.66)$ \\
recoverU    & $-0.14$ & $(-0.82,\ 0.54)$ \\
recoverU$+$ & $-0.19$ & $(-1.04,\ 0.68)$ \\
\bottomrule
\end{tabular}
\end{table}

\begin{figure}[H]\centering
  \includegraphics[width=0.72\linewidth]{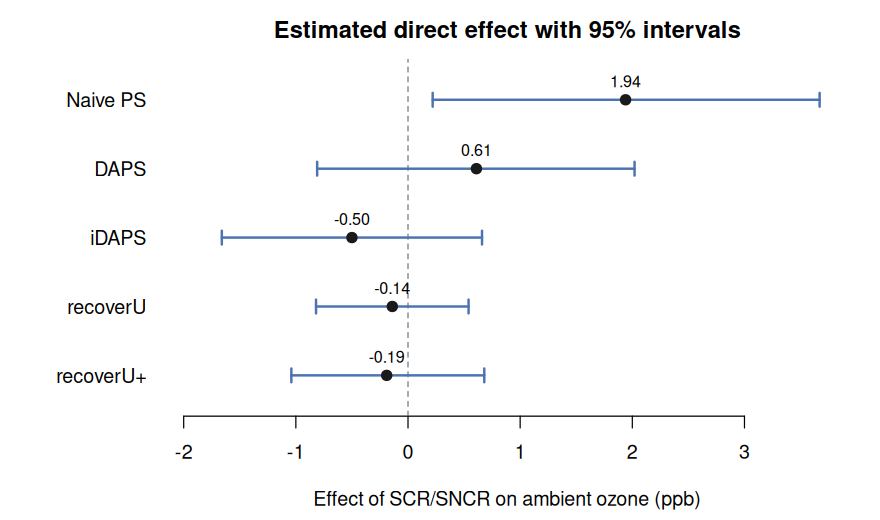}
  \caption{Estimated direct effect of SCR/SNCR installation on ambient ozone
  with $95\%$ intervals (values as in Table~\ref{tab:ozone}: matched-pairs
  standard errors for the matching rows, spatial HAC intervals for the
  doubly robust rows). The naive interval lies above zero; every adjusted
  interval comfortably contains it.}
  \label{fig:ozoneforest}
\end{figure}

Table~\ref{tab:ozone} and Figure~\ref{fig:ozoneforest} show a clear
gradient. The naive analysis suggests SCR/SNCR \emph{increases} ozone by
about $2$ ppb, with an interval excluding zero;  a counterintuitive
conclusion that the adjusted analyses reveal to be an artefact of spatial
confounding: facilities are treated where pollution problems are worst.
Adjusting for the spatial phenomena moves every estimate toward zero, and
all four adjusted intervals comfortably contain it, with iDAPS and the two
doubly robust estimators turning mildly negative. The substantive
conclusion is twofold: there is \emph{no evidence} that SCR/SNCR
installation reduces ambient ozone at the facility level once both spatial
phenomena are accounted for; and a naive analysis would have delivered a
confidently wrong sign. One methodological point in passing: greedy
matching without a declared seed is not reproducible. Across $40$
matching seeds, the point estimates range over $[1.54, 2.25]$ (naive),
$[-0.40, 2.20]$ (DAPS) and $[-0.58, 0.59]$ (iDAPS) which is why
\textsf{spaci} exposes the seed (all matching rows above declare seed $115$)
and offers a deterministic optimal-matching alternative.

Two diagnostics complete the picture. Figure~\ref{fig:matchmap} shows the
geography of the matched samples: naive propensity matching pairs
facilities across the continent, whereas DAPS and iDAPS trade some matches
for far tighter spatial proximity. Exactly the behaviour
Proposition~\ref{prop:bound} rewards, since every kilometre of matched
distance buys confounding bias through the variogram. The plug-in
bias-bound diagnostic for the iDAPS match (coordinatewise form;
$c_{\mathrm{ov}} = 1.2$, calibrated in Section 5.2) evaluates to $X$-term $15.3$, exposure term $0.3$
and confounding term $5.3$ ppb: deliberately conservative,
worst-case-aligned quantities whose message is nonetheless conservative but informative with $127$ matched pairs a mean of $2.1^\circ$
($\approx 230$ km) apart on a residual field with fitted sill $56$
ppb$^2$, continental-scale matching cannot \emph{guarantee} tight
confounder control. This is precisely why we regard the doubly robust
estimates, which use all $473$ units and model the confounder directly, as
the primary analysis, with the matching estimators as design-based
corroboration.
Figure~\ref{fig:psoverlap} shows the estimated propensity-score overlap
between treated and control facilities for the naive, recoverU and
recoverU$+$ propensity models: the overlap is adequate and similar across
the three specifications, so the doubly robust weights are well behaved and
the differences among the estimates in Table~\ref{tab:ozone} reflect
adjustment, not extrapolation.

\begin{figure}[H]\centering
  \includegraphics[width=0.5\linewidth]{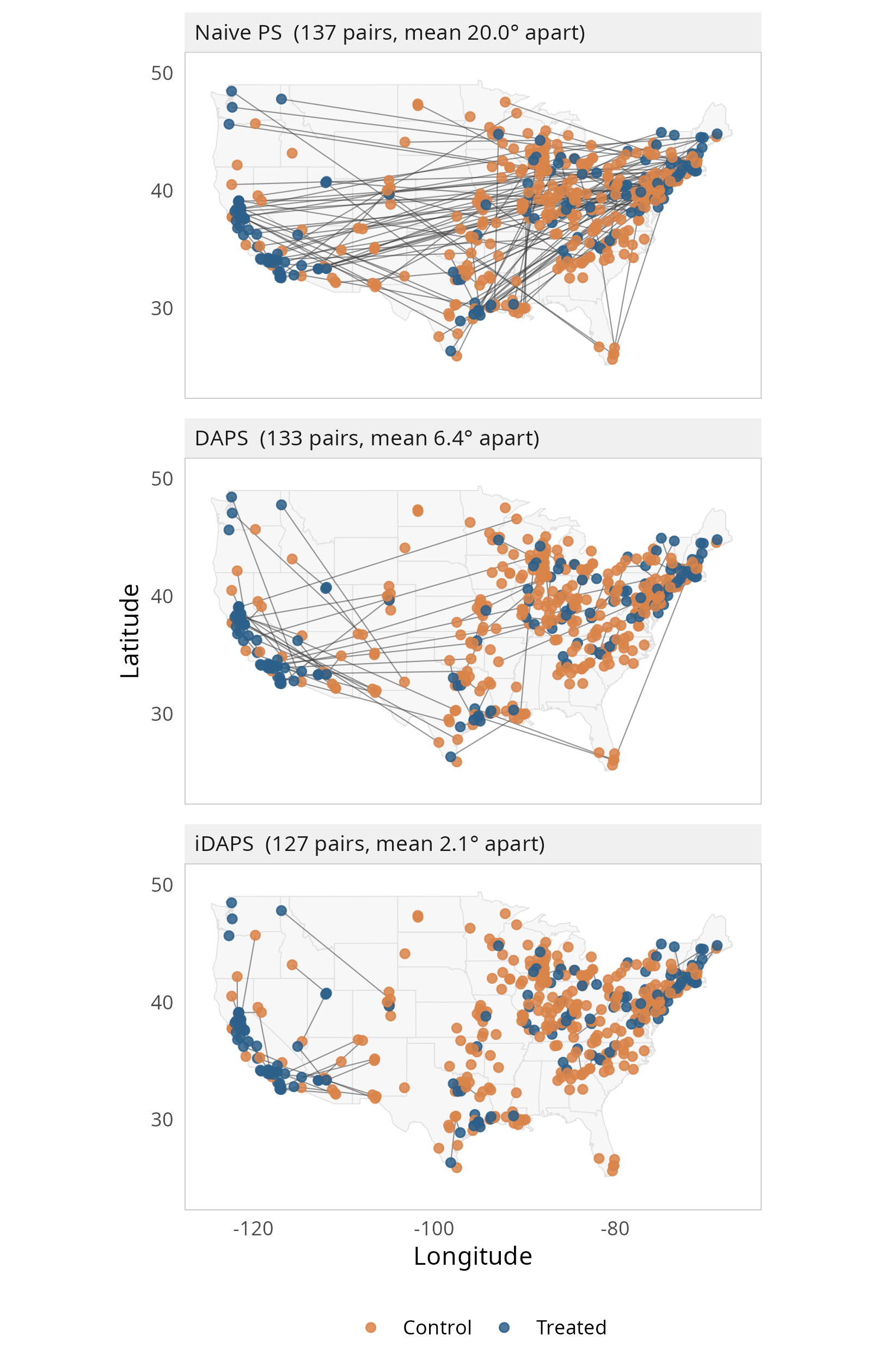}
  \caption{Matched treated (blue) and control (orange) facilities, on a US
  state outline for geographic context, for naive propensity-score matching,
  DAPS and iDAPS (one matching realisation each; panel labels give the
  number of matched pairs and the mean matched distance in degrees). Naive
  matching pairs units across the continent; the distance-adjusted metrics
  keep pairs spatially close, exactly the behaviour rewarded by the
  variogram bias bound of Proposition~\ref{prop:bound}.}
  \label{fig:matchmap}
\end{figure}

\begin{figure}[H]\centering
  \includegraphics[width=0.5\linewidth]{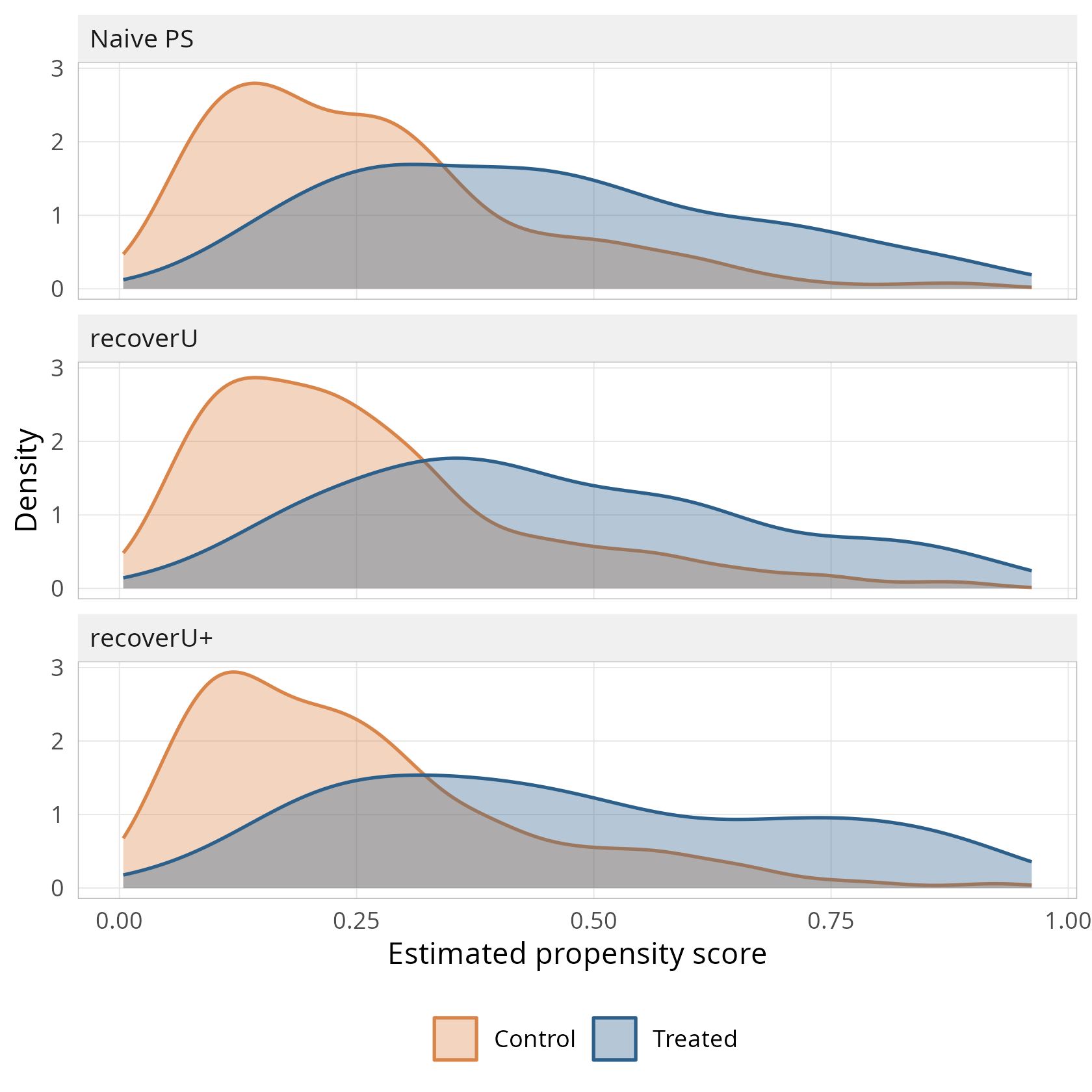}
  \caption{Estimated propensity-score overlap between treated and control
  facilities under the naive, recoverU and recoverU$+$ propensity models.
  Overlap is adequate and stable across specifications, indicating that the
  doubly robust estimates rest on comparison rather than extrapolation.}
  \label{fig:psoverlap}
\end{figure}

Sensitivity: over $\tau \in \{0.1, 0.2, 0.4\}$ the recoverU$+$ estimate
moves within $[-0.19, -0.13]$ and the iDAPS estimate within
$[-0.50, 0.32]$; over calipers $\{0.1, 0.25, 0.5\}$ the iDAPS estimate is
$-0.07$, $-0.50$ and $1.18$. Every configuration's interval contains zero
(marginally at the loosest caliper, where the estimate drifts toward the
naive value exactly as Corollary~\ref{cor:cons} predicts: looser matching
means larger matched distances and less confounding control). The
qualitative conclusion is stable.

\section{Discussion}\label{sec:discussion}

When critical decisions rest on causal effect estimates, even moderate bias
can have consequences ranging from mild to severe, and spatial confounding
and interference are among the most common and most commonly conflated sources of such bias in environmental, agricultural and epidemiological
studies. This paper has given the joint problem a unified, design-based
treatment: a precise estimand ($\mathrm{ADET}$) with an identification result whose
assumptions are stated so that each can be interrogated; a matching
estimator (iDAPS) whose composite metric is not a heuristic but the
minimiser of an explicit bias bound; a doubly robust estimator
(recoverU$+$) whose guarantee is stated honestly, with the unavoidable
residual characterised exactly rather than assumed away; and inference
procedures whose finite-sample calibration is demonstrated rather than
asserted. A distinctive feature of the paper is its validation discipline:
every proposition is accompanied by a numerical experiment designed to
falsify it, and the experiments are reported whether or not they flatter
the methods, the weak-confounding undercoverage of
Section~\ref{sec:sim-calib} being the clearest example.

\subsection*{Practical guidance}

For the applied user, the framework condenses into a workflow.
\emph{(1) Construct the exposure} with a scientifically motivated kernel and
bandwidth, and report a sensitivity band over $\tau$; the stress test of
Section~\ref{sec:misspec} suggests direct-effect estimates are forgiving of
moderate exposure misspecification, but this should be verified per
application. \emph{(2) Run both estimator families.} They fail in different
ways; iDAPS is model-free but pays a price in matched-sample size and in
the variogram bias floor; recoverU$+$ uses all units and models the
confounder directly but inherits $b_{U_{A}}$  so agreement between them is
informative and disagreement is diagnostic. When the units are spread over
distances large relative to the confounder's spatial range (as in our
application, where matched pairs averaged $250$ km apart), the doubly robust
analysis should be primary and the matching analysis corroborative;
in dense designs the preference can reverse.
\emph{(3) Report the diagnostics}: the plug-in bias bound and the mean
matched distance for iDAPS, and the recovery quality for recoverU$+$.
\emph{(4) Never report independence-based standard errors}; use the spatial
HAC (cheap) or the block bootstrap (which also absorbs the matching and
selection steps), and for a headline hypothesis test the conditional
randomisation test, which requires no variance estimation at all.
\emph{(5) Declare the matching seed or use the deterministic optimal
matching}; undeclared greedy matching is not reproducible
(Section~\ref{sec:application}).

\subsection*{What the results teach beyond the two estimators}

Three lessons extend past the specific procedures. First, an
\emph{estimand} lesson: under interference the phrase ``the ATT'' is not
merely imprecise but ill-defined, and analyses that ignore this can compare
estimators that target different quantities without noticing
(Section~\ref{sec:sims}). Defining the direct effect $\mathrm{ADET}$ and the
spillover $\mathrm{SET}$ separately takes only a paragraph, and it removes that
risk. Second,
a \emph{mechanism} lesson: distance-based adjustments work through spatial
differencing, so their currency is the variogram of the confounder at the
matched distance. This single observation explains why DAPS works, why it
degrades under interference (an uncontrolled exposure-imbalance term), why
all such methods share a nugget floor, and it converts matching quality from
a rhetorical claim into a computable number. Third, a \emph{recovered
confounder} lesson: the component of a latent field aligned with treatment
is unrecoverable from a single realisation, not as a practical difficulty
but as a matter of identification. So, ``doubly robust'' claims for
recovered-confounder methods \citep{pokal2023improved} necessarily hold only
up to the residual $b_{U_{A}}$. Making that residual explicit turns an
overclaim into a quantitative sensitivity parameter.

\subsection*{Computation}

The computational profile is asymmetric. iDAPS is light: the distance
matrices are $O(n^2)$ and the weight search vectorises, so even the full
grid completes in seconds at $n = 500$. recoverU$+$ is dominated by the
$O(n^3)$ Mat\'ern likelihood evaluations of the recovery stage
($\approx 15$ s per fit at $n = 300$); fixing the smoothness at $\nu = 1/2$
both stabilises the fit and reduces its cost, and for $n$ in the thousands
the dense solves should be replaced by Vecchia-type approximations, which
slot into the recovery step without altering the estimator. One
computational finding deserves wide circulation: when a resampling scheme
contains a kriging stage, resampling \emph{observations} duplicates
locations and the kriging over-smooths them, deflating the bootstrap
variance below even the i.i.d.\ one; resampling the influence values avoids
the pathology (Section~\ref{sec:inference}).

\subsection*{Limitations}

First, recoverU$+$ carries the residual bias $b_{U_{A}}$
(Proposition~\ref{prop:pdr}); it vanishes only when the confounder does not
drive treatment, and its closed form under a probit treatment model is the
natural anchor for a sensitivity analysis, a ``spatial E-value'' in the
sense of \citet{vanderweele2017} that we leave to future work.
Second, iDAPS inherits an irreducible bias floor from micro-scale (nugget)
variation in the confounder (Corollary~\ref{cor:cons}); no distance-based
method escapes it, and reporting the fitted nugget alongside the bound makes
the floor visible.

Third, the \emph{recovery-variance gap} when spatial confounding is
weak, \emph{all} influence-based standard errors undercover, and the reason
is structural rather than small-sample: in the $u = 0$ rows of
Table~\ref{tab:coverage} the SE ratios are $\approx 0.55$ and coverage
$\approx 0.75$, essentially unchanged from $n = 150$ to $n = 300$. The
mechanism is a \emph{generated-regressor} problem
\citep{pagan1984, murphy1985}. recoverU$+$ is a two-stage estimator: it
first \emph{constructs} the covariate $\widehat U_{R}$ from the data by
kriging, and then treats it as if it were an ordinary observed covariate.
Every standard error of Section~\ref{sec:inference} is computed conditional
on the one realised $\widehat U_{R}$, so the sample-to-sample variability of
the recovery step itself is invisible to all of them. When confounding is
strong this conditioning is harmless, $\widehat U_{R}$ is anchored by
genuine spatial signal and barely varies across samples. When confounding
is weak there is little signal to recover, $\widehat U_{R} $ is largely
re-fitted noise that changes from sample to sample, and the variability it
passes into $\hat\tau$ is exactly the component the standard errors miss.
Neither implemented remedy reaches it (the influence-resampling bootstrap
conditions on $\widehat U_{R}$ by construction; re-running the full pipeline
on resampled blocks triggers the duplicate-coordinates pathology noted
under \emph{Computation}), but a \emph{parametric field bootstrap}, 
simulating residual fields from the fitted Mat\'ern at the original
locations and repeating the recovery and estimation on each replicate
would, and we regard it as the most practical correction.
Recovery-variance-corrected inference is, to our knowledge, an open
problem; the practical cost of the gap is contained, because the regime in
which it bites, weak spatial confounding, is precisely the regime in which
recoverU$+$ is least needed.

\subsection*{Extensions}

Several extensions of the framework are natural. First, the framework could be extended to mediation in the presence of both spatial confounding and spatial interference. The ozone application already provides a natural setting for this extension because total NOx emissions can be viewed as a mediator between SCR/SNCR technology and ambient ozone. This would allow the total effect to be decomposed into direct and NOx-mediated pathways while accounting for the same spatial confounding and interference that motivate the present analysis.

Second, the framework could accommodate continuous treatments through generalised propensity scores \citep{giffin2023generalized}, allowing the methodology to address settings in which treatment intensity, rather than treatment status, varies across locations. A further extension would consider spatio-temporal interventions, such as staggered technology adoption. In such settings, spatial differencing could address spatially structured confounding while temporal differencing, as in difference-in-differences designs, could remove additional time-varying confounding components.

Finally, the theoretical and inferential properties of recoverU+ warrant further development. In particular, a formal central limit theorem under increasing-domain asymptotics would provide a rigorous basis for the influence-function inference used here. In the weak-confounding regime, the current influence-function standard errors do not account for the sample-to-sample variability introduced by estimating the recovered confounder. A parametric field bootstrap that 
simulates the spatial confounder from the fitted Matérn model and re-runs the recovery and estimation stages could provide a practical way to propagate this additional uncertainty. Future work could also compare the proposed frequentist framework with joint Bayesian approaches to spatial confounding and interference, particularly in terms of sensitivity to modelling assumptions and prior specification.

\section*{Code availability}
Two openly available components accompany this paper: (i) the R package
\textsf{spaci}, which implements all of the methodology
(\url{https://cran.r-project.org/web/packages/spaci/}); and (ii) a reproduction
repository containing the scripts that regenerate every table and figure in
this paper, with a script-to-output map and runtime guidance
(\url{https://github.com/olatunjijohnson/spaci-paper-reproduction}).

\section*{Data availability}
The power-plant facility data are publicly available from the Harvard
Dataverse \citep{DVN/DKXXSN_2016} and are bundled, with documentation, in
the \textsf{spaci} package. The distributed ozone outcome is recorded in ppm;
the analyses in this paper use ppb (the raw values multiplied by $10^3$).

\appendix

\section*{Appendix}
\section{Proof of Proposition~\ref{prop:id} (identification)}
\label{app:id}

The proof proceeds in four steps, each converting one non-observable object
into an observable one; a different assumption licenses each step.

\emph{Step 1: from interference to the exposure summary.}
Fix a treatment level $a$ and an exposure level $e$. Under interference the
potential outcome $Y_i(a, e)$ could in principle depend on the entire
treatment vector $\bm{A}_{-i}$ of other units; Assumption~\ref{a:exp} states
that it does not, all dependence on others' treatments flows through the
scalar summary $E_i$. Consequently, conditioning on $E_i = e$ makes
$\bm{A}_{-i}$ irrelevant, and together with consistency
(Assumption~\ref{a:con}, $Y_i = Y_i(A_i, E_i)$) the conditional mean of the
\emph{observed} outcome among units with $A_i = a$ and $E_i = e$ equals the
conditional mean of the potential outcome:
\[
  \mathbb{E}[Y \mid A = a,\ X,\ E = e,\ U]
  \;=\; \mathbb{E}[Y(a, e) \mid A = a,\ X,\ E = e,\ U].
\]
This is the step that would fail under a misspecified exposure mapping,
which is why Assumption~\ref{a:exp} is stated separately and stress-tested
in Section~\ref{sec:misspec}.

\emph{Step 2: removing the conditioning on treatment.}
Assumption~\ref{a:ign} states that, given $(X, E, U)$, the treatment carries
no further information about the potential outcomes:
$\{Y(a,e)\} \perp \!\!\! \perp A \mid X, E, U$. Hence the right-hand side of Step 1
equals $\mathbb{E}[Y(a,e) \mid X, E = e, U]$, the same quantity for treated and
control units. Assumption~\ref{a:pos} guarantees that both treatment arms
occur with positive probability at every $(X, E, U)$, so the $a = 0$
conditional mean is well defined on the support of the treated units, which
is where $\mathrm{ADET}$ averages.

\emph{Step 3: replacing the latent confounder by its recoverable part.}
Steps 1--2 identify the potential-outcome means given $(X, E, U)$, but $U$
is unobserved. Assumption~\ref{a:rec} states that the control-arm mean does
not change when the conditioning variable $U$ is replaced by $U_{R}$:
$\mathbb{E}[Y(0,e) \mid X, E, U] = \mathbb{E}[Y(0,e) \mid X, E, U_{R}]$. (For the treated arm
no such assumption is needed: on the event $A = 1$ the observed outcome
\emph{is} $Y(1, E)$, so $\mathbb{E}[Y(1,E) \mid A{=}1, V]$ is directly observable.)
Writing $V = (X, E, U_{R})$, the two conditional means in the statement of the
proposition are therefore observable functionals.

\emph{Step 4: assembling the estimand.}
By iterated expectations over the distribution of $V$ among the treated,
\begin{align*}
     \mathrm{ADET}
  &= \mathbb{E}\big\{ \mathbb{E}[Y(1,E) \mid A{=}1, V] - \mathbb{E}[Y(0,E) \mid A{=}1, V]
      \,\big|\, A = 1 \big\} \\
  &= \mathbb{E}\big\{ \mathbb{E}[Y \mid A{=}1, V] - \mathbb{E}[Y \mid A{=}0, V] \,\big|\, A=1 \big\},
\end{align*}

where the first term uses consistency directly and the second chains Steps
1--3. This proves part (a).

\emph{Part (b).}
Suppose Assumption~\ref{a:rec} fails but the additive working model
\eqref{eq:outcome} holds, so that
$\mathbb{E}[Y(0,e) \mid X, E, U] = \mu_0(V) + \theta_U U_{A}$ with $\mu_0$
$V$-measurable. Repeating Step 3 with this expression, the identified
functional becomes
\[
  \mathbb{E}\big\{ \mathbb{E}[Y \mid A{=}1, V] - \mathbb{E}[Y \mid A{=}0, V] \,\big|\, A{=}1 \big\}
  = \mathrm{ADET} + \theta_U\,
    \mathbb{E}\big\{ \mathbb{E}[U_{A} \mid A{=}1, V] - \mathbb{E}[U_{A} \mid A{=}0, V]
      \,\big|\, A = 1 \big\}.
\]
It remains to show the correction term equals $b_{U_{A}}$ of
Proposition~\ref{prop:pdr}. The first summand is
$\mathbb{E}[U_{A} \mid A = 1]$ by iterated expectations. For the second, write
$m(V) = \mathbb{E}[U_{A} \mid A = 0, V]$ and let $p(V) = \mathbb{P}(A{=}1 \mid V)$ with
odds $w = p/(1-p)$; then
\begin{align*}
  \mathbb{E}[m(V) \mid A{=}1]
  &= \frac{\mathbb{E}[A\, m(V)]}{\mathbb{E}[A]}
  = \frac{\mathbb{E}[p(V)\, m(V)]}{\mathbb{E}[A]}
  = \frac{\mathbb{E}[(1-A)\, w(V)\, m(V)]}{\mathbb{E}[A]} \\
  &= \frac{\mathbb{E}[(1-A)\, w(V)\, U_{A}]}{\mathbb{E}[(1-A) w(V)]}
  = \mathbb{E}^{w}[U_{A} \mid A{=}0],
\end{align*}
using $\mathbb{E}[A \mid V] = p(V)$, $\mathbb{E}[(1-A) \mid V] = 1 - p(V)$, the definition
of $m$, and $\mathbb{E}[(1-A) w] = \mathbb{E}[p(V)] = \mathbb{E}[A]$. Hence, the correction is
exactly $\theta_U(\mathbb{E}[U_{A} \mid A{=}1] - \mathbb{E}^{w}[U_{A} \mid A{=}0]) = b_{U_{A}}$.
\qed

\section{Proof of Proposition~\ref{prop:bound} and
Corollary~\ref{cor:cons}}
\label{app:bound}

\emph{The exact decomposition \eqref{eq:decomp-bias}.}
Let $(i, j(i))$ denote a matched pair with $A_i = 1$, $A_{j(i)} = 0$.
Differencing the working model \eqref{eq:outcome} within the pair,
\[
  Y_i - Y_{j(i)}
  = \theta_1
  + \theta_2^{\top}(X_i - X_{j(i)})
  + \theta_3 (E_i - E_{j(i)})
  + \theta_U \{U(s_i) - U(s_{j(i)})\}
  + (\varepsilon_i - \varepsilon_{j(i)}),
\]
because the treatment terms contribute $\theta_1 \cdot 1 - \theta_1 \cdot 0
= \theta_1$. Averaging over the $m_T$ pairs and taking expectations yields
\eqref{eq:decomp-bias} provided
$\mathbb{E}[\varepsilon_i - \varepsilon_{j(i)}] = 0$. This last point deserves care,
because the pair indices are random: the matching map is a deterministic
function of the propensity scores, locations and exposures, hence
measurable with respect to $\sigma(\bm{A}, \bm{X}, \bm{s})$; since
$\varepsilon \perp \!\!\! \perp (\bm{A}, \bm{X}, \bm{s}, U)$ with mean zero, the
selected differences $\varepsilon_i - \varepsilon_{j(i)}$ remain mean-zero
conditional on the matching, and the unconditional expectation vanishes.

\emph{The bound \eqref{eq:bound}: covariate and exposure terms.}
By the triangle inequality and Jensen's inequality
($|\mathbb{E}[Z]| \le \mathbb{E}|Z|$ applied coordinatewise),
\[
  \big|\theta_2^{\top} \mathbb{E}[\Delta X]\big|
  \le \sum_{k=1}^{p} |\theta_{2,k}|\, \big|\mathbb{E}[\Delta X_k]\big|
  \le \sum_{k=1}^{p} |\theta_{2,k}|\, \mathbb{E}|\Delta X_k| ,
  \qquad
  |\theta_3\, \mathbb{E}[\Delta E]| \le |\theta_3|\, \mathbb{E}|\Delta E| .
\]
The coordinatewise form is preferred to the Cauchy--Schwarz aggregate
$\lVert\theta_2\rVert \mathbb{E}\lVert\Delta X\rVert$ because each product
$|\theta_{2,k}|\,\mathbb{E}|\Delta X_k|$ is expressed in outcome units and is
therefore invariant to rescaling individual covariates; the aggregate is
not, and can be vacuous when covariates are heterogeneously scaled (as in
the application, where covariate scales span several orders of magnitude).

\emph{The bound \eqref{eq:bound}: confounding term.}
First suppose the matching map were independent of $U$ given the locations.
Conditional on a pair at distance $d$, stationarity gives
$\mathbb{E}[(U_i - U_j)^2 \mid d] = 2\gamma_U(d)$ by the definition of the
semivariogram, and Jensen's inequality
($\mathbb{E}|Z| \le \sqrt{\mathbb{E}[Z^2]}$) yields
$\mathbb{E}[|\Delta U| \mid d] \le \sqrt{2\gamma_U(d)}$; iterating expectations over
the matched-distance distribution gives
$\mathbb{E}|\Delta U| \le \mathbb{E}[\sqrt{2\gamma_U(d_{ij(i)})}]$.
In fact the matching map is \emph{not} independent of $U$: it is
$\sigma(\bm{A}, \bm{X}, \bm{s})$-measurable, and $\bm{A}$ depends on $U$
through the treatment model, so matched pairs are selected in a way that
tilts the conditional law of $U$. The tilt is controlled by positivity:
by Bayes' rule the density ratio of $U$ given $\bm{A}$ to its marginal is a
ratio of treatment-assignment probabilities, which Assumption~\ref{a:pos}
bounds away from $0$ and $1$; hence there exists a finite constant
$c_{\mathrm{ov}} \ge 1$ with
$\mathbb{E}[|\Delta U| \,\big|\, \text{matched}] \le
c_{\mathrm{ov}}\, \mathbb{E}[\sqrt{2\gamma_U(d_{ij(i)})}]$. The constant is
finite-sample in nature; empirically $c_{\mathrm{ov}} \approx 1.2$ across
the validation grid of Section~\ref{sec:sim-bound}, and the uniform-in-$n$
theory of this constant is part of the companion asymptotic work.
Multiplying by $|\theta_U|$ and combining the three displays proves
\eqref{eq:bound}. \qed

\emph{Proof of Corollary~\ref{cor:cons}.}
If the caliper $c_n \to 0$ while $m_T \to \infty$, the matched distances
satisfy $d_{ij(i)} \to 0$ (the composite metric dominates a multiple of the
normalised spatial distance whenever $\pi_2 > 0$, and the balance criterion
penalises $\bar d$ directly). If $U$ is mean-square continuous with
$\gamma_U(0^{+}) = 0$, then $\sqrt{2\gamma_U(d_{ij(i)})} \to 0$ pointwise
and, being bounded by $\sqrt{2\,\sup_h \gamma_U(h)} < \infty$, the
expectation converges to zero by dominated convergence; the confounding
term of \eqref{eq:bound} vanishes. If instead $U$ has a nugget, i.e.\
$\gamma_U(h) \ge \eta > 0$ for all $h > 0$, then
$\mathbb{E}[\sqrt{2\gamma_U(d)}] \ge \sqrt{2\eta}$ \emph{regardless} of how small
the matched distances become, giving the floor
$c_{\mathrm{ov}} |\theta_U| \sqrt{2\eta}$. The floor is shared by any
method whose confounding control operates through spatial proximity,
because micro-scale variation is precisely the component of $U$ that
proximity cannot difference away. \qed

\section{Proof of Proposition~\ref{prop:pdr} (partial double robustness)}
\label{app:pdr}

Throughout, expectations are population quantities; $p(V)$ denotes the
working propensity model, $w = p/(1-p)$ its odds, $m_0(V)$ the working
control-outcome model, and $\mu_0(V) = \mathbb{E}[Y(0) \mid V]$ the true one. The
population version of the estimator \eqref{eq:dr} is
\[
  \tau^{*}
  = \frac{1}{\pi_1}\,
    \mathbb{E}\Big[\big\{A - (1-A)\, w(V)\big\}\big(Y - m_0(V)\big)\Big],
  \qquad \pi_1 = \mathbb{E}[A].
\]

\emph{Step 1: substitute the outcome decomposition.}
Under \eqref{eq:outcome} with Assumption~\ref{a:rec} relaxed to the additive
form, $Y = \theta_1 A + \mu_0(V) + \theta_U U_{A} + \tilde\varepsilon$ with
$\mathbb{E}[\tilde\varepsilon \mid V, A] = 0$. Then
\[
  Y - m_0(V) = \theta_1 A + \{\mu_0 - m_0\}(V) + \theta_U U_{A}
             + \tilde\varepsilon .
\]

\emph{Step 2: the treatment term identifies the estimand.}
The contribution of $\theta_1 A$ is
$\pi_1^{-1} \mathbb{E}[\{A - (1-A) w\}\, \theta_1 A]
= \pi_1^{-1} \theta_1 \mathbb{E}[A] = \theta_1 = \mathrm{ADET}$,
because $A(1-A) = 0$ annihilates the weighted-control part.

\emph{Step 3: the noise term vanishes.}
$\mathbb{E}[\{A - (1-A)w(V)\}\,\tilde\varepsilon]
= \mathbb{E}\big[\{A - (1-A)w(V)\}\, \mathbb{E}[\tilde\varepsilon \mid V, A]\big] = 0$,
since the bracket is $(V, A)$-measurable.

\emph{Step 4: the model-error term vanishes if either nuisance is correct.}
Write $g(V) = \{\mu_0 - m_0\}(V)$, a $V$-measurable error. If the outcome
model is correct, $g \equiv 0$ and the term vanishes trivially. If instead
the propensity model is correct, $w$ uses the true $p(V)$, and the key
identity is
\begin{align*}
  \mathbb{E}[(1-A)\, w(V)\, g(V)]
  &= \mathbb{E}\big[\mathbb{E}[1-A \mid V]\, w(V)\, g(V)\big] \\
 & = \mathbb{E}\big[(1 - p(V))\, \tfrac{p(V)}{1-p(V)}\, g(V)\big] \\
 & = \mathbb{E}[p(V)\, g(V)] \\
  &= \mathbb{E}[A\, g(V)] ,
\end{align*}
where the first equality is iterated expectations and the last uses
$\mathbb{E}[A \mid V] = p(V)$ again. Hence
$\mathbb{E}[\{A - (1-A)w\}\, g(V)] = \mathbb{E}[A g] - \mathbb{E}[A g] = 0$: the odds-weighted
controls exactly reproduce the treated distribution of any
$V$-measurable function --- this is the double-robustness mechanism.

\emph{Step 5: the surviving term.}
No analogous cancellation applies to $\theta_U U_{A}$, because $U_{A}$ is not
$V$-measurable, and by construction of the projection \eqref{eq:decomp} it
is correlated with $A$ even given $V$. Collecting Steps 1--4,
\[
  \tau^{*} - \mathrm{ADET}
  = \frac{\theta_U}{\pi_1}
    \Big( \mathbb{E}[A\, U_{A}] - \mathbb{E}[(1-A)\, w(V)\, U_{A}] \Big)
  = \theta_U \Big( \mathbb{E}[U_{A} \mid A{=}1]
    - \mathbb{E}^{w}[U_{A} \mid A{=}0] \Big)
  = b_{U_{A}},
\]
using $\mathbb{E}[A U_{A}] = \pi_1 \mathbb{E}[U_{A} \mid A{=}1]$ and, at the true $p$,
$\mathbb{E}[(1-A)w] = \mathbb{E}[p(V)] = \pi_1$.

\emph{Claim (i).}
If $U \perp \!\!\! \perp \bm{A}$, the $L_2$-projection of $U$ onto the span of the
(centred) treatment field is zero, so $U_{A} \equiv 0$ and $b_{U_{A}} = 0$:
the classical double-robustness statement is recovered as the special case
of no unmeasured treatment--confounder dependence.

\emph{Claim (ii).}
By Cauchy--Schwarz applied to each term,
$|\mathbb{E}[A U_{A}]| \le \sqrt{\pi_1}\, \sigma_{U_{A}}$ and
$|\mathbb{E}[(1-A) w U_{A}]| \le \big(\mathbb{E}[(1-A) w^2]\big)^{1/2} \sigma_{U_{A}}$, so
\[
  |b_{U_{A}}| \;\le\; |\theta_U|\, \sigma_{U_{A}}\, \kappa,
  \qquad
  \kappa = \pi_1^{-1/2} + \pi_1^{-1}\big(\mathbb{E}[(1-A)\, w(V)^2]\big)^{1/2},
\]
a constant depending only on the treatment prevalence and the tail of the
odds weights i.e.\ on overlap. Poor overlap inflates
$\mathbb{E}[(1-A)w^2]$ and hence the worst-case residual bias, which is the same
mechanism that drives the finite-sample fragility observed in the CM cell
of Table~\ref{tab:dr}. \qed

\bibliographystyle{plainnat}

\bibliography{sn-bibliography}

\end{document}